\documentclass[10pt,letterpaper,twocolumn]{article}
\usepackage[T1]{fontenc}
\usepackage[utf8]{inputenc}
\usepackage{lmodern}
\usepackage[margin=0.75in]{geometry}
\usepackage{amsmath,amssymb,amsfonts,bm,amsthm,mathtools}
\usepackage{graphicx,booktabs,microtype,placeins,xcolor}
\usepackage[numbers,sort&compress]{natbib}
\usepackage{authblk,titlesec}
\usepackage[hidelinks]{hyperref}
\hypersetup{
 pdftitle={Wavelength-Uniform Quantum Algorithms for Mixed-State Quantum Dynamics},
 pdfauthor={Shi Jin and Chuwen Ma}
}
\graphicspath{{figures/}}
\titleformat{\section}{\centering\small\bfseries}{\thesection.}{1em}{\MakeUppercase}
\titleformat{\subsection}{\centering\small\bfseries}{\thesection.\thesubsection.}{1em}{}
\titleformat{\paragraph}[runin]{\normalfont\itshape}{}{0pt}{}
\titlespacing*{\section}{0pt}{2.2ex}{1.5ex}
\titlespacing*{\subsection}{0pt}{2ex}{1.2ex}
\titlespacing*{\paragraph}{\parindent}{1ex}{0.6em}

\newcommand{\ve}{\varepsilon}

\newcommand{\ii}{\mathrm{i}}
\newcommand{\dd}{\mathrm{d}}
\newcommand{\R}{\mathbb{R}}
\newcommand{\norm}[1]{\left\lVert #1\right\rVert}
\newcommand{\ket}[1]{|#1\rangle}
\newcommand{\bra}[1]{\langle #1|}

\newcommand{\Ymat}{\mathsf Y}

\newcommand{\Dmat}{\mathsf D}

\newcommand{\qP}{q_{\rm P}}
\newcommand{\wtO}{\widetilde O}
\newcommand{\poly}{\operatorname{poly}}

\newenvironment{acknowledgments}{\par\medskip\noindent\textit{Acknowledgments.---}}{\par}
\theoremstyle{definition}
\newtheorem{theorem}{Theorem}
\theoremstyle{plain}
\newtheorem{lemma}{Lemma}

\title{Wavelength-Uniform Quantum Algorithms\\for Mixed-State Quantum Dynamics}
\author[1,2,3]{Shi Jin}
\author[4,5,6]{Chuwen Ma}
\affil[1]{School of Mathematical Sciences, Shanghai Jiao Tong University, Shanghai 200240, China}
\affil[2]{Institute of Natural Sciences, Shanghai Jiao Tong University, Shanghai 200240, China}
\affil[3]{MOE-LSC, Shanghai Jiao Tong University, Shanghai 200240, China}
\affil[4]{School of Mathematical Sciences, East China Normal University, Shanghai 200241, China}
\affil[5]{Key Laboratory of MEA, Ministry of Education, East China Normal University, Shanghai 200241, China}
\affil[6]{Shanghai Key Laboratory of PMMP, East China Normal University, Shanghai 200241, China}
\date{September 28, 2026}

\begin{document}
\twocolumn[\begin{@twocolumnfalse}
\maketitle
\vspace{-1em}
\begin{center}\small
\href{mailto:shijin-m@sjtu.edu.cn}{shijin-m@sjtu.edu.cn}\qquad
\href{mailto:cwma@math.ecnu.edu.cn}{cwma@math.ecnu.edu.cn}
\end{center}
\begin{abstract}
One of the main challenges in numerical  simulation of quantum dynamics is the prohibitive cost in the semi-classical regime, in which the de Broglie wave length is small compared with the characteristic length scale and the solution is highly oscillatory. For the von-Neumann equation for mixed-state quantum dynamics, this difficulty is overcome by using the Weyl variables,  under which the solution is not oscillatory. Furthermore, we use the integral representation of the potential difference, which robustly captures the classical limit as the semi-classical parameter approaches zero. By using exact Hermite moments and quantum singular value transformation to treat the polynomial and sparse coordinate matrices of the dense projected  smooth non-polynomial potentials respectively, we obtain  a  quantum algorithm efficient for {\it all} ranges of wave lengths, 
with complexity {\it polynomial} in the spatial dimension and discretization and query bounds containing {\it no} negative powers of the small wavelength.
Thus  it  can capture the correct physical observables even if the spatial grid does not resolve the frequency, hence defying the Nyquist-Shannon sampling theorem.
\end{abstract}
\vspace{1em}\end{@twocolumnfalse}]

\textbf{Introduction.---}
In many physical and chemical problems, both quantum and classical regimes co-exist, for example the 
Born--Oppenheimer molecular dynamics and nonadiabatic surface hopping
in photochemistry \cite{Born1927,worth2004,Tully1990,Schutte2002},  proton
transfer in enzymes and in solution \cite{Antoniou1997,Aqvist1993}, and
mixed quantum--classical dynamics and strong-field ionization
\cite{Kapral2006,Becker2012}. In the semi-classical regime, the solutions can be highly oscillatory, computationally resolving them  can be prohibitively expensive,  especially in high-dimensional problems. This raises a fundamental  question: 
{\it  Can a single quantum algorithm overcome the curse of dimensionality, remain uniformly accurate across the quantum and semiclassical regimes, and capture physical observables without resolving the rapid oscillations of the quantum state?}

For the mixed-state von Neumann equation, we develop a Weyl--Hermite quantum algorithm, which combines
the Weyl-variable formulation with suitable Hermite discretizations and sparse block encoding for the potential terms, 
to positively answer this question. 

Consider the dimensionless von Neumann equation
\begin{equation}\label{vN} \ii\ve\partial_t\widehat\rho^\ve =[\widehat H^\ve,\widehat\rho^\ve], \qquad \widehat H^\ve=-\frac{\ve^2}{2}\Delta+V(X), \end{equation} where \(V:\R^d\to\R\) is a real-valued scalar potential and \(\widehat\rho^\ve\) is a density operator with integral kernel \(\rho^\ve(t;X,Y)\), \(X,Y\in\R^d\). Here the commutator \([A,B]=AB-BA\), and \(\ve\) is the ratio of the reduced de Broglie wavelength to the characteristic length scale, or the square root of the mass ratio between electrons and nuclei in the Born-Oppenheimer approximation \cite{Born1927}. The semiclassical regime corresponds to \(\ve\ll1\). 

Numerical simulation of quantum dynamics faces two major challenges: high dimensionality and highly oscillatory semiclassical states. A wavelength-resolving computation with mesh size  \(h=O(\ve)\) stores
\(O(\ve^{-2d})\) density-matrix kernel samples on a fixed domain
\cite{BaoJinMarkowich2002,JMS2011,LasserLubich2020}:
on a unit box, \(\ve=10^{-2}\) and \(d=10\) already requires
\(10^{40}\) samples. Often $d$ can even be of $O(10^2)$ \cite{Wehrle2014Oligothiophenes}.  
A single-phase Wentzel--Kramers--Brillouin (WKB) ansatz  breaks
down beyond caustics  since the solutions become multi-valued \cite{Arnold1967,EngRun2003,SMM2003}. It is also not valid for $\ve=O(1)$. 
Gaussian wave-packet superpositions can pass through caustics,
but they require $h=O(\sqrt{\ve})$, and the methods are not valid if \(\ve=O(1)\)
\cite{Hagedorn1980,Heller1987,JMS2011,LasserLubich2020}.
The time or Trotter splitting methods allow
\(\ve\)-independent time steps for computing physical observables but still require $h=O(\ve)$
\cite{BaoJinMarkowich2002,GolseJinPaul2021,LasserLubich2020,FangQu2026}.
Quantum algorithms, including those for semiclassical dynamics,
encode grid-based wave functions using a number of data qubits
logarithmic in the grid size
\cite{Georgescu2014QuantumSimulation,Kassal2008ChemicalDynamics,JinLiLiu2022}, and  consequently amplitude encoding alone removes neither the wavelength-scale
grid requirement nor the inverse power of \(\ve\)-dependence of evolution costs.
Our starting point is the Weyl-variable formulation. Introduce \(x=(X+Y)/2\) and \(y=(X-Y)/\ve\), and set \(R^\ve(t,x,y)=\rho^\ve(t;X,Y)\). Equation~\eqref{vN} becomes \begin{align} \partial_tR^\ve &=\ii\sum_{j=1}^d\partial_{x_j}\partial_{y_j}R^\ve -\ii U_\ve(x,y)R^\ve,\nonumber\\ U_\ve(x,y) &=\frac{V(x+\ve y/2)-V(x-\ve y/2)}{\ve}. \label{eq:weylpde} \end{align} 
For suitable mixed-state families, this formulation admits
\(\ve\)-uniform regularity, as proved by Filbet and Golse
\cite{FilbetGolse2025,FilbetGolse2026}, based on which they   constructed a classical
Hermite method with \(\ve\)-uniform approximation estimates.
They used a truncated Taylor expansion which  works efficiently for $\ve \ll1$, but is not accurate for $\ve=O(1)$. 

A numerical difficulty in evaluating $U_\ve$ is that when
$\ve\ll1$, due to finite-precision arithmetic, the computer
may not accurately distinguish $V(x+\ve y/2)$ from
$V(x-\ve y/2)$.
Moreover, combining separate block-encodings of these shifted
potentials to encode $U_\ve$ introduces an $\ve^{-1}$
normalization factor.
To address these issues, 
we decompose  \(V=V_{\rm P}+V_{\rm F}\), where \(V_{\rm P}\) has fixed polynomial degree and \(V_{\rm F}\) is a smooth non-polynomial that admits a Fourier integral or periodic series. 
For $V_{\rm P}$, the corresponding potential difference
$U_\ve^{\rm P}$ has an exact finite expansion in nonnegative
even powers of $\ve$.
We encode it using exact projected Hermite moments, whose
matrices remain {\it sparse} for fixed polynomial degree.
For $V_F$, our new idea starts from  using the integral formulation and the  Fourier transform
\(V_{\rm F}(x)=(2\pi)^{-d}\int\widehat V_{\rm F}(\xi)
e^{\ii\xi\cdot x}\,\dd\xi\), we obtain 
\begin{align}
    U_\ve^{\rm F}(x,y)
&=\int_{-1/2}^{1/2}
 y\cdot\nabla V_F(x+s\ve y)\,\dd s
 \notag\\
 &=\frac{\ii}{(2\pi)^d}\int_{\R^d}
    \widehat V_{\rm F}(\xi)e^{\ii\xi\cdot x}(\xi\cdot y)
    \operatorname{sinc}\!\left(\frac{\ve}{2}\xi\cdot y\right)
    \dd\xi \notag\\
    &\xrightarrow{\ve\to0}y\cdot\nabla V_F(x),
    \label{eq:fourier-U}
\end{align}
where \(\operatorname{sinc}z=\sin z/z\) and
\(\operatorname{sinc}0=1\).
It is clear that this formulation is most suitable for the $\ve \to 0$ limit as shown in the last line of the above equation, since it naturally leads to the forcing term in the classical Liouville equation which is the classical limit of the von Neumann equation via the Wigner transform. In addition,  although this will lead to a dense Hermite matrix, it is important to notice that for each Fourier mode, multiplication by \(\xi\cdot y\) has a {\it sparse}
Hermite matrix. We  apply the  quantum singular value transformation (QSVT) to this sparse matrix in an enlarged
Hermite space, then project onto the retained modes and combine the
Fourier contributions using the linear combination of
unitaries (LCU) technique \cite{LowChuang2017,Gilyen2019}. 

Under the stated regularity and structured-access assumptions,
our quantum algorithms possess the following properties:
\textit{i)} A single discretization giving accurate physical  observables
with cost {\it independent} of   \(0<\ve\le1\), with end-to-end resource bounds
containing {\it no} inverse powers of \(\ve\).
\textit{ii)} A QSVT--LCU block encoding of the 
potential to prescribed accuracy, {\it without} dense matrix assembly or
semiclassical Taylor truncation.
\textit{iii)} Query and gate costs polynomial in \(d\) for fixed
time and accuracy, under additional dimension-stable assumptions
including efficient input and observable preparation, avoiding
the exponential cost of a direct classical full-tensor
discretization for physical observables.
To our knowledge, this is the first end-to-end quantum algorithm
achieving {\it $\ve$-independent}  cost for the mixed-state quantum dynamics. As a result,  
we compute accurately the physical observables without the  Nyquist--Shannon resolution--which requires several grid points per wave length-- of the original
quantum state.

\textbf{Discretization.---}
We restrict and periodize the problem in \(x\) on a box
\(\Omega_{x,\delta}\simeq\mathbb T^d\), with observable error
\(O(\delta)\) independent of  \(\ve\) under the stated domain-control
assumptions \cite{SuppMat}. We retain \(K\) Hermite modes per
\(y\) direction and use \(M\) grid points per spatial direction.

\emph{Hermite discretization in \(y\).---}
Let \(I_K=\{0,\ldots,K-1\}^d\) and
\(\Phi_{\mathbf k}(y)=\prod_{j=1}^d\Phi_{k_j}(y_j)\), where
\(\Phi_k\) are normalized Hermite functions. The expansion
\(R_K^\ve=\sum_{\mathbf k\in I_K}R_{\mathbf k}(t,x)
\Phi_{\mathbf k}(y)\) and the Hermite ladder identity
\(\partial_{y_j}\Phi_{\mathbf k}
=\sqrt{k_j/2}\,\Phi_{\mathbf k-\mathbf e_j}
-\sqrt{(k_j+1)/2}\,\Phi_{\mathbf k+\mathbf e_j}\)
\cite{Tang1993} give
\begin{align}
    \ii\partial_tR_{\mathbf k}
    &=\sum_{j=1}^d\left(
    \sqrt{\frac{k_j}{2}}\,\partial_{x_j}R_{\mathbf k-\mathbf e_j}
    -\sqrt{\frac{k_j+1}{2}}\,\partial_{x_j}R_{\mathbf k+\mathbf e_j}
    \right)\nonumber\\
    &\quad+\sum_{\mathbf l\in I_K}
    [U_{\ve,K}(x)]_{\mathbf k\mathbf l}R_{\mathbf l},
    \label{eq:Hermite-system-d}
\end{align}
where \(\mathbf e_j\) is the \(j\)th coordinate vector,
\(R_{\mathbf m}=0\) for \(\mathbf m\notin I_K\), and
\([U_{\ve,K}(x)]_{\mathbf k\mathbf l}
=\int_{\R^d}U_\ve(x,y)\Phi_{\mathbf k}(y)
\Phi_{\mathbf l}(y)\,\dd y\).
The same ladder identity defines the skew-Hermitian matrix
\(\mathsf D_{y_j,K}\) representing \(\partial_{y_j}\), with only
\((\mathsf D_{y_j,K})_{\mathbf k,\mathbf k+\mathbf e_j}
=-(\mathsf D_{y_j,K})_{\mathbf k+\mathbf e_j,\mathbf k}
=\sqrt{(k_j+1)/2}\)
as nonzero entries.
The initial coefficients are the Hermite projections of \(R^\ve(0)\).

\emph{Spatial discretization in \(x\).---}
Set \(I_M=\{0,\ldots,M-1\}^d\). On the periodic mesh of size \(h\),
approximate \(\partial_{x_j}\) by an order-\(r_x\), fixed-width,
skew-Hermitian finite-difference matrix \(\mathsf D_{x_j,h}\).
Collecting the coefficients into
\(\mathbf R(t)=h^{d/2}\sum_{\mathbf i\in I_M,\mathbf k\in I_K}
R_{\mathbf k}(t,x_{\mathbf i})\ket{\mathbf i}\ket{\mathbf k}\)
gives
\begin{equation}
    \ii\partial_t\mathbf R
    =(H_{\rm tr}+U_{\ve,h,K}^{\rm ex})\mathbf R,
    \label{eq:semidiscrete-d}
\end{equation}
where \(H_{\rm tr}=-\sum_j\mathsf D_{x_j,h}\otimes\mathsf D_{y_j,K}\) corresponds to the spatial derivative (kinetic energy) term,
and \(U_{\ve,h,K}^{\rm ex}=\sum_{\mathbf i\in I_M}
\ket{\mathbf i}\bra{\mathbf i}\otimes U_{\ve,K}(x_{\mathbf i})\) corresponds to the potential term.
Each derivative matrix acts only on its corresponding \(j\)th index.
The generator is Hermitian for real \(V\).

\emph{Remark.---}
Alternative discretizations in \(x\) include centered finite-volume
and Fourier spectral methods, as well as Hermite expansions on
\(\R^d\) under suitable decay assumptions.

\textbf{Quantum implementation.---}
A block encoding embeds a normalized matrix into a unitary
\cite{Gilyen2019}. We block-encode an approximation of the generator
in Eq.~\eqref{eq:semidiscrete-d}, treating its sparse and dense
components separately.

\emph{Sparse terms.---}
The kinetic operator \(H_{\rm tr}\) admits a sparse block encoding. For the polynomial component
\(V_{\rm P}\) of degree \(D_{\rm P}\), the symmetric difference is 
\begin{equation}
    U_\ve^{\rm P}(x,y)
    =\sum_{m=0}^{\qP}\frac{\ve^{2m}}{2^{2m}}
    \sum_{|\nu|=2m+1}
    \frac{(\partial^\nu V_{\rm P})(x)y^\nu}{\nu!},
    \label{eq:poly-U}
\end{equation}
where \(\qP=\lfloor(D_{\rm P}-1)/2\rfloor\) and \(\nu\) is a
multi-index. This finite expansion contains {\it only nonnegative} powers
of \(\ve\), with no asymptotic remainder.
We encode its monomials using tensor products of the exact
\(K\times K\) Hermite moment matrices
\((\Ymat_K^{[r]})_{\ell k}
=\int_{\R}y^r\Phi_\ell(y)\Phi_k(y)\,\dd y\).
In general, \(\Ymat_K^{[r]}\ne(\Ymat_K^{[1]})^r\): powers of a
truncated coordinate matrix do not reproduce the exact projection.
This construction gives the exact projected block
\(U_{\ve,h,K}^{\rm P}\), with row sparsity polynomial in \(d\)
for fixed \(D_{\rm P}\) \cite{SuppMat}.

\emph{Smooth non-polynomial potential.---}Choose a paired quadrature
\(V_{\rm F}(x)\approx\sum_{q\in\mathcal Q}a_qe^{\ii\xi_q\cdot x}\),
where \(a_q=w_q\widehat V_{\rm F}(\xi_q)/(2\pi)^d\),
\(w_{-q}=w_q>0\), and \(\xi_{-q}=-\xi_q\).
Thus \(a_{-q}=\overline{a_q}\) for real \(V_{\rm F}\).
For periodic potentials, \(a_q\) are the truncated Fourier-series
coefficients. The zero mode is omitted because it cancels in the difference.
Truncating the coordinate matrices to \(K\) modes before evaluating the potential matrix function generally does not reproduce the exact \(K\)-mode projection of the potential. To control this discrepancy, we enlarge each Hermite register to \(\widetilde K=K+L_{\rm buf}\) modes and evaluate the matrix function in this larger space before projecting back onto the retained modes. Thus, \(\widetilde K\) controls the accuracy of the potential approximation, while \(K\) remains the solution cutoff.
Define the sparse ladder matrix
\(\Ymat_N=\sum_{k=0}^{N-2}\sqrt{(k+1)/2}
(\ket{k+1}\bra{k}+\ket{k}\bra{k+1})\), and let
\(\Ymat_{j,N}\) act on the \(j\)th Hermite factor.
Let \(J_d\) embed the retained \(K^d\)-dimensional space into the
enlarged \(\widetilde K^d\)-dimensional space by zero padding, and set
\(\mathsf A_{q,\widetilde K}=\sum_j\xi_{q,j}\Ymat_{j,\widetilde K}\)
and \(\mathsf E_q\ket{\mathbf i}
=e^{\ii\xi_q\cdot x_{\mathbf i}}\ket{\mathbf i}\).
Equation~\eqref{eq:fourier-U} then leads to the finite-section approximation
\begin{align}
	\mathsf F_{q,K}^{\rm fs}
	&:=J_d^\dagger\mathsf A_{q,\widetilde K}
	\operatorname{sinc}\!\left(\frac{\ve}{2}\mathsf A_{q,\widetilde K}\right)J_d,
	\nonumber\\
	U_{\ve,h,K}^{\rm F,fs}
	&:=\ii\sum_{q\in\mathcal Q}a_q\mathsf E_q\otimes\mathsf F_{q,K}^{\rm fs}.
	\label{eq:Fourier-finite-d}
\end{align}
The buffer $L_{\rm buf}$ controls the error from replacing the infinite Hermite
operator by a finite section \cite{SuppMat}.

We use QSVT to implement a polynomial approximation to the matrix function
through queries to a block encoding of the {\it sparse} matrix
\(\mathsf A_{q,\widetilde K}\). The number of queries is proportional
to the polynomial degree.
Projection and LCU then produce the generally dense potential block.
It is important to notice that its implementation cost is controlled by the normalization and
polynomial degree below, without evaluating or storing dense matrix
entries.

The sparse matrix \(\mathsf A_{q,\widetilde K}\) has block-encoding
normalization \(\beta_q=\sqrt{2\widetilde K}\norm{\xi_q}_1\),
where \(\norm{\xi_q}_1=\sum_{j=1}^d|\xi_{q,j}|\).
Set \(\mathcal V_{1,\mathcal Q}=\sum_q|a_q|\norm{\xi_q}_1\),
\(\beta_{\max}=\max_q\beta_q\), and
\(\alpha_{\rm pot}^{\rm F}=\sqrt{2\widetilde K}\mathcal V_{1,\mathcal Q}\).
QSVT, followed by compression and LCU, yields a Hermitian
approximation \(U_{\ve,h,K}^{\rm F,QSVT}\) to
\(U_{\ve,h,K}^{\rm F,fs}\) in Eq.~\eqref{eq:Fourier-finite-d},
with operator error at most \(\delta_{\rm pot}\).
The QSVT polynomials have maximum degree
\begin{equation}
    n_{\rm F}
    =O\!\left(1+\ve\beta_{\max}
    +\log\!\left[1+
    \frac{\alpha_{\rm pot}^{\rm F}}{\delta_{\rm pot}}\right]\right).
    \label{eq:Fourier-resources-main}
\end{equation}
Thus neither the normalization nor the degree introduces an inverse
power of \(\ve\) at fixed discretization. The approximation retains
the full finite-\(\ve\) matrix function; Fourier quadrature and
finite-section errors are controlled separately.

\emph{Hamiltonian simulation and resources.---}
The resulting block encoding represents
the Hermitian matrix
\begin{equation}
    H_{\rm WH}=H_{\rm tr}+U_{\ve,h,K}^{\rm P}
    +U_{\ve,h,K}^{\rm F,QSVT}.
    \label{eq:H-WH-main}
\end{equation}
On \(\Omega_x=\Omega_{x,\delta}\), set
\(\Gamma_{m,d}^{\rm P}(\Omega_x)=\sum\limits_{|\nu|=2m+1}
\frac{\norm{\partial^\nu V_{\rm P}}_{L^\infty(\Omega_x)}}{\nu!}\).
Its normalization satisfies
\begin{align}
    \alpha_{\rm WH}
    =O_{r_x,D_{\rm P}}\!\Big(&dh^{-1}K^{1/2}
    +\sum_{m=0}^{\qP}\ve^{2m}K^{m+1/2}
    \Gamma_{m,d}^{\rm P}(\Omega_x)\nonumber\\
    &+\sqrt{\widetilde K}\,\mathcal V_{1,\mathcal Q}\Big).
    \label{eq:alpha-fixed-main}
\end{align}
We assume reversible evaluation of polynomial derivatives, Fourier
coefficients, frequencies, and grid phases, coherent Fourier-weight
preparation, and access to QSVT phases controlled by the Fourier
index, with costs {\it uniform} in \(\ve\) at the allocated precision.
Outer qubitization and quantum signal processing implement
\(e^{-\ii TH_{\rm WH}}\) \cite{LowChuang2019}; their cost includes
the inner QSVT work required by each Fourier-block query.

\textbf{Inputs and physical observables.---}
Uniform simulation cost yields an $\ve$-uniform cost for the physical observables 
when input preparation, readout, and their normalization factors are
also controlled. Let \(\mathbf R_0\) contain the sampled Hermite
coefficients of \(R^\ve(0)\), and define
\begin{align}
	(\mathbf b_a)_{\mathbf i,\mathbf k}
	&=h^{d/2}\int_{\R^d}\check a(x_{\mathbf i},y)
	\Phi_{\mathbf k}(y)\,\dd y,\nonumber\\
	A_{\rm WH}(T)
	&=\langle\mathbf b_a,e^{-\ii TH_{\rm WH}}\mathbf R_0\rangle,
	\label{eq:observable}
\end{align}
where \(a(x,p)\) is a real phase-space observable and
$
\check a(x,y)
=\frac{1}{(2\pi)^d}\int_{\R^d}
a(x,p)e^{\ii p\cdot y}\,\dd p
$
is its inverse Fourier transform in momentum.
For example, the choices
\(a(x,p)=w(x,p)\),
\(a(x,p)=p_jw(x,p)\), and
\(a(x,p)=\tfrac12|p|^2w(x,p)\)
give windowed versions of the mass, momentum, and kinetic-energy
observables.
Here \(w(x,p)\) is a phase-space window selecting the spatial
region and momentum range of interest.
For known norms \(s_R=\norm{\mathbf R_0}_2\) and
\(s_a=\norm{\mathbf b_a}_2\), overlap estimation on normalized
vectors gives \(A_{\rm WH}(T)/(s_as_R)\)
\cite{Rall2020,Brassard2002}.
We assume \(s_R,s_a,\norm{\check a}_{L^2_{x,y}}=O(1)\), uniformly
in \(\ve\) and \(d\), and efficient preparation of the normalized
input and readout states. 

For example, a trace-one Gaussian mixed state with Weyl kernel
proportional to \(e^{-|x-x_0|^2/(2\sigma^2)}e^{-|y|^2/2}\),
\(\sigma>1/2\), has fixed spatial and momentum widths and physical
coherence length \(O(\ve)\). Its Weyl profile is independent of \(\ve\),
with \(\norm{R_0^\ve}_{L^2}^2=(2\sigma)^{-d}\le1\).
Only \(\mathbf k=\mathbf0\) is occupied initially, and its factorized
Gaussian amplitudes admit preparation with cost polynomial in
\(d,\log M,\log K,\log\delta^{-1}\) \cite{GroverRudolph2002};
Gaussian readout states admit a similar construction.
The restriction on inputs is substantive:
\(\norm{R_0^\ve}_{L^2}^2=\ve^{-d}
\operatorname{Tr}[(\widehat\rho_0^\ve)^2]\)
\cite{FilbetGolse2026}. A resolved trace-one pure-state family has
\(s_R=\Theta(\ve^{-d/2})\) and lies outside the uniform end-to-end
guarantee, although the Hamiltonian block-encoding bounds remain valid.

\textbf{Uniform accuracy and main result.---}
We first fix \(d\) and \(T\).
For the real phase-space observable \(a\) introduced above,
define its exact expectation at time \(T\) by
$A^\ve(T):=\langle \check a,R^\ve(T)\rangle_{L^2(\mathbb R^{2d})}$.
Its Weyl--Hermite approximation is \(A_{\rm WH}(T)\)
defined above.
We use \(\ve\)-uniform weighted regularity in Weyl variables,
following the regularity propagation analysis of Filbet and Golse
\cite{FilbetGolse2026}.
The regularity conditions and discretization error estimates
are detailed in Ref.~\cite{SuppMat}.
For \(p\ge1\) and \(2p\ge r_x\), with sufficient regularity,
the Hermite error is \(O(K^{-(p-1/2)})\).
We choose \(K\) to make this error \(O(\delta)\), and
then choose \(h\) so that the \(O(h^{r_x})\) spatial
error at this cutoff is also \(O(\delta)\).

For the  contribution of potential $V_F$, the frequency cutoff
and quadrature are selected to control the
projected-potential error, using the first absolute
Fourier-moment tail for truncation.
The buffer \(L_{\rm buf}\) controls the finite-section
error in Eq.~\eqref{eq:Fourier-finite-d}.
With the corresponding error budgets, all discretization
parameters can be chosen independently of \(\ve\) to yield
\begin{equation}
	\sup_{0<\ve\le1}
	|A_{\rm WH}(T)-A^\ve(T)|=O(\delta).
	\label{eq:uniform-consistency-main}
\end{equation}

To control how these choices grow with \(d\), we
additionally require polynomial growth of 
\(\Gamma_{m,d}^{\rm P}(\Omega_{x,\delta})\),
the zeroth and first Fourier moments and their
quadrature counterparts, the required Fourier bandwidth,
and the access and preparation costs, at fixed
\(T,\delta,p,r_x,D_{\rm P}\) \cite{SuppMat}.
Then the Hermite cutoff
can be chosen as
\begin{equation}
	K_{d,\delta}
	=O\!\left(
	\operatorname{poly}(d)\,\delta^{-2/(2p-1)}
	\right).
	\label{eq:K-rate-main}
\end{equation}
The required inverse mesh size \(h^{-1}\) and enlarged
Hermite cutoff \(\widetilde K=K+L_{\rm buf}\) likewise
grow at most polynomially in \(d\).

\begin{theorem}[Uniform observable complexity]
	\label{thm:resource-main}
	Under the assumptions stated above and detailed in
	Ref.~\cite{SuppMat}, choose the discretization parameters
	to satisfy Eq.~\eqref{eq:uniform-consistency-main}.
	Let \(\alpha_{{\rm WH},d}(\ve;\delta)\) be the block-encoding
	normalization satisfying Eq.~\eqref{eq:alpha-fixed-main}
	at these parameters on \(\Omega_{x,\delta}\), and set
	\(\Lambda_{d,\delta}=\max\{1,s_Rs_a/\delta\}\).
	For \(s_Rs_a>0\), assume preparation oracles
	\(O_R\ket0=\mathbf R_0/s_R\) and
	\(O_a\ket0=\mathbf b_a/s_a\), with costs uniform in \(\ve\).
	
	An estimate \(\widetilde A(T)\) satisfying
	\begin{equation}
		|\widetilde A(T)-A^\ve(T)|=O(\delta),
		\qquad 0<\ve\le1,
		\label{eq:end-to-end-error-main}
	\end{equation}
	can be obtained with constant success probability using
	\begin{equation}
		N_H=\wtO\!\left[
		\Lambda_{d,\delta}
		\left\{
		T\alpha_{{\rm WH},d}(\ve;\delta)
		+\log(1+\Lambda_{d,\delta})
		\right\}
		\right]
		\label{eq:queries}
	\end{equation}
	queries to the block encoding of \(H_{\rm WH}\), together
	with \(O(\Lambda_{d,\delta})\) calls to each preparation
	oracle and its inverse. Here \(\wtO\) is $O$ by suppressing the  logarithmic factors.
	
	For non-empty set \(\mathcal Q\), 
	Hamiltonian block-encoding queries uses
	\[
	N_{\rm ladder}=O(n_{\rm F}N_H)
	\]
	queries to the Hermite ladder block encodings in the contribution of $V_F$, where \(n_{\rm F}\) is the maximum
	inner QSVT degree in Eq.~\eqref{eq:Fourier-resources-main}.
	If \(\mathcal Q\) is an empty set,
    the $V_F$ contribution
	is omitted and \(N_{\rm ladder}=0\).

    With the discretization parameters chosen independently of
\(\ve\), both query bounds are uniform for \(0<\ve\le1\).
	Under the additional dimension assumptions stated above,
	they are polynomial in \(d\) for  fixed
	\(T,\delta,p,r_x,D_{\rm P}\).
	With the stated bounds on access and preparation costs,
	gate complexity has the same polynomial dependence on
	\(d\) and contains {\it no inverse powers}  of \(\ve\).
	If \(s_Rs_a=0\), return \(\widetilde A(T)=0\).
\end{theorem}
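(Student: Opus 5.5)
The estimate will be assembled from three error sources and one resource count. The sources are discretization, Hamiltonian simulation, and overlap estimation; the resource count converts accuracy requirements into queries. By Eq.~\eqref{eq:uniform-consistency-main} we have $|A_{\rm WH}(T)-A^\ve(T)|=O(\delta)$ uniformly in $0<\ve\le1$. It therefore suffices to produce an estimate $\widetilde A(T)$ with $|\widetilde A(T)-A_{\rm WH}(T)|=O(\delta)$ and then apply the triangle inequality. The case $s_Rs_a=0$ is trivial: then $\mathbf R_0=0$ or $\mathbf b_a=0$, so $A_{\rm WH}(T)=0$, and returning $0$ gives Eq.~\eqref{eq:end-to-end-error-main}. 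From now on assume $s_Rs_a>0$.

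\emph{Step 1: simulation.} Since $H_{\rm WH}$ is Hermitian with normalization $\alpha_{\rm WH}$, qubitization with quantum signal processing \cite{LowChuang2019} gives a block encoding $W$ of an operator $\widetilde{\mathcal U}$ with $\|\widetilde{\mathcal U}-e^{-\ii TH_{\rm WH}}\|\le\eta$. This uses $O(T\alpha_{\rm WH}+\log(1/\eta))$ queries to the block encoding of $H_{\rm WH}$. We take $\eta=\delta/(s_Rs_a)$, so $\log(1/\eta)\le\log(1+\Lambda_{d,\delta})$ up to constants. Then
\[
\bigl|\langle\mathbf b_a,\widetilde{\mathcal U}\mathbf R_0\rangle-A_{\rm WH}(T)\bigr|\le s_as_R\eta=\delta .
\]

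\emph{Step 2: overlap estimation.} Apply the Hadamard test to the circuit $O_a^\dagger W O_R$, once for the real part and once for the imaginary part. The success amplitude encodes $\langle 0|O_a^\dagger\,\widetilde{\mathcal U}\,O_R|0\rangle=\langle\mathbf b_a,\widetilde{\mathcal U}\mathbf R_0\rangle/(s_as_R)$, where the ancilla flag of the block encoding is postselected inside the amplitude. Amplitude estimation \cite{Brassard2002,Rall2020} returns this normalized overlap to additive accuracy $\delta/(s_Rs_a)$ with constant success probability. It uses $O(s_Rs_a/\delta)$ coherent repetitions, so $O(\Lambda_{d,\delta})$ uses of $W$, $O_R$, $O_a$ and their inverses. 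Multiplying by the known factor $s_Rs_a$ gives additive error $O(\delta)$. The total query count is $\Lambda_{d,\delta}$ times the per-$W$ cost, which is Eq.~\eqref{eq:queries}; any logarithmic overhead from the median trick is absorbed into $\wtO$.

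\emph{Step 3: inner ladder cost.} Each query to the block encoding of $H_{\rm WH}$ invokes the Fourier-block LCU once. Each branch of that LCU is one controlled QSVT circuit of degree at most $n_{\rm F}$, which makes $O(n_{\rm F})$ calls to the ladder block encoding of $\mathsf A_{q,\widetilde K}$. The controlled selection over $q$ is realized as a single circuit using phases indexed by $q$, rather than as $|\mathcal Q|$ separate circuits. Hence $N_{\rm ladder}=O(n_{\rm F}N_H)$, and $N_{\rm ladder}=0$ when $\mathcal Q=\emptyset$.

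\emph{Step 4: uniformity in $\ve$ and polynomial dependence on $d$.} Since $K,h,\widetilde K,\mathcal Q$ are chosen independently of $\ve$, Eq.~\eqref{eq:alpha-fixed-main} with $\ve\le1$ bounds $\alpha_{\rm WH}$ by its $\ve=1$ value. Similarly, Eq.~\eqref{eq:Fourier-resources-main} gives $n_{\rm F}=O(1+\beta_{\max}+\log(1+\alpha^{\rm F}_{\rm pot}/\delta_{\rm pot}))$. $\Lambda_{d,\delta}$ depends only on $s_R,s_a,\delta$, which are uniform by assumption. For polynomial dependence on $d$, substitute Eq.~\eqref{eq:K-rate-main}, the polynomial bounds on $h^{-1}$ and $\widetilde K$, polynomial growth of $\Gamma^{\rm P}_{m,d}$ and $\mathcal V_{1,\mathcal Q}$, and the bandwidth bound on $\beta_{\max}$. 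Each term of $\alpha_{\rm WH}$ and of $n_{\rm F}$ is then polynomial in $d$. Gate complexity follows by multiplying query counts by the assumed polynomial, $\ve$-uniform oracle and preparation costs.

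The main obstacle is the bookkeeping in Step 4. We must make sure that no hidden $\ve$-dependence enters through $\delta_{\rm pot}$, the QSVT phase precision, or the error budget that fixes $K$ and $L_{\rm buf}$. These all have to be taken from the supplementary error analysis, where they are chosen independently of $\ve$. A lesser point is that $\delta_{\rm pot}$ perturbs $H_{\rm WH}$ itself. I would treat it as already included in Eq.~\eqref{eq:uniform-consistency-main} through the Duhamel bound $T\delta_{\rm pot}s_Rs_a$, rather than redistributing it here.
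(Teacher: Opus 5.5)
Your proposal is correct and follows essentially the same route as the paper's proof. That route is: the triangle inequality through $A_{\rm WH}(T)$ using the deterministic ledger, which already contains the term $s_as_RT\eta_{\rm pot}$; qubitized simulation plus amplitude-estimated overlap, with all normalized errors of order $\Lambda_{d,\delta}^{-1}$; $O(n_{\rm F})$ inner ladder queries per Hamiltonian query; and $\ve$-uniformity from the nonnegative powers of $\ve$ in $\alpha_{\rm WH}$ and $n_{\rm F}$ at $\ve$-independent parameters. The paper additionally folds approximate state preparation and a finite-precision block-encoding defect (with $T\eta_{\rm BE}=O(\Lambda_{d,\delta}^{-1})$) into the same allocation, which your exact-oracle reading of the statement covers. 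You should also cap the simulation tolerance, for instance $\eta=\min\{1/2,\delta/(s_Rs_a)\}$, so that it stays below one when $s_Rs_a<\delta$.
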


The estimate is
\(\widetilde A(T)=s_as_R\widetilde\mu\), with simulation,
normalized-state preparation, and overlap errors each
chosen as \(O(\Lambda_{d,\delta}^{-1})\).
Equations~\eqref{eq:Fourier-resources-main} and
\eqref{eq:alpha-fixed-main}, evaluated with discretization
parameters chosen independently of \(\ve\), give upper bounds
on \(n_{\rm F}\) and \(\alpha_{\rm WH}\) that are uniform
for \(0<\ve\le1\).
Circuit depth, auxiliary workspace, and state-preparation
costs are detailed in Ref.~\cite{SuppMat}.

\begin{figure*}[t!]
 \centering
 \includegraphics[width=\textwidth]{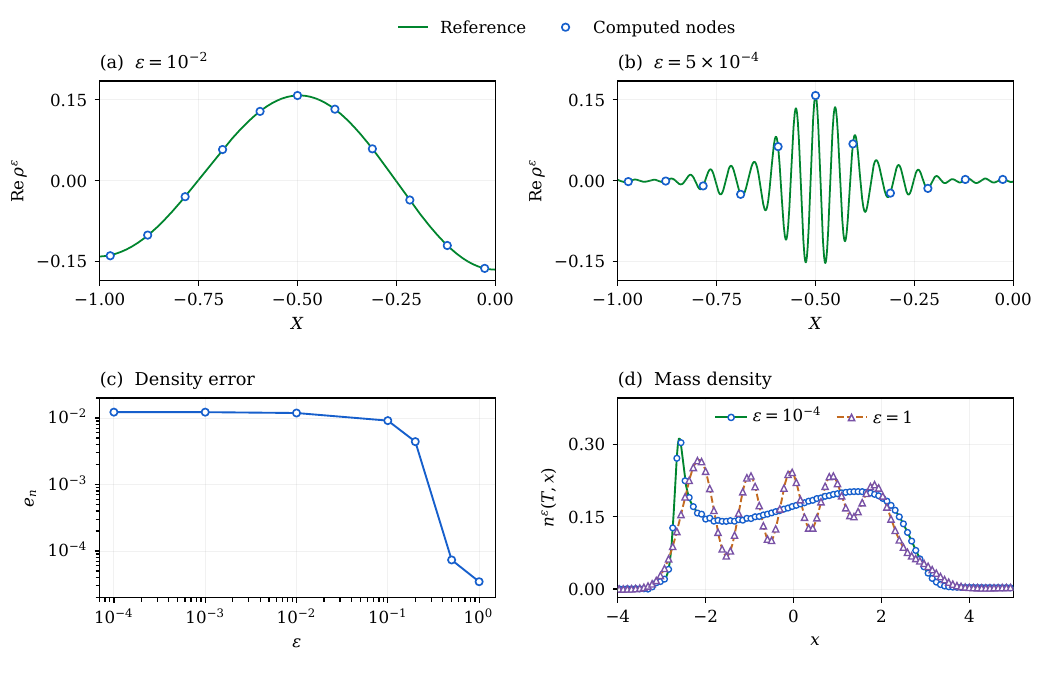}
 \caption{Morse-potential dynamics at \(T=19\) with fixed
 \(M=256\), \(K=192\), and \(h=0.09375\).
 (a),(b) Non-diagonal kernel cuts \(g^\ve(X)\) for
 \(\ve=10^{-2}\) and \(5\times10^{-4}\).
 Green curves are the independent reference; blue circles are
 all eleven computed nodes in the displayed interval, with
 \(X_i=(2x_i+0.01)/1.98\) and spacing \(2h/1.98\).
 (c) Relative maximum nodal density error \(e_n\) versus \(\ve\),
 normalized by the peak reference density over the full computational
 domain. 
 (d) Mass densities for \(\ve=10^{-4}\) and \(1\) from the same
 initial Weyl profile; curves show references and markers show
 all 96 computed nodes in the displayed interval for each \(\ve\).
 Every Weyl--Hermite run uses centered differences in \(x\), fourth-order
 Yoshida splitting with \(\Delta t=0.01\), and \(\widetilde K=224\).}
 \label{fig:independent-validation}
\end{figure*}

\textbf{A numerical benchmark.---}
We benchmark the Weyl--Hermite discretization with the Morse potential
and a common initial Weyl profile,
\begin{align}
 V(x)&=20(1-e^{-0.16x})^2,\qquad \sigma=0.6,\nonumber\\
 R_0(x,y)&=\frac{1}{\sqrt{2\pi}\sigma}
 e^{-(x-4)^2/(2\sigma^2)-y^2/2}.
 \label{eq:morse-benchmark}
\end{align}
All Weyl--Hermite runs retain the full finite-\(\ve\) potential difference and use
\(M=256\), \(K=192\), \(h=0.09375\), and \(T=19\), with
centered differences in \(x\) and fourth-order time splitting
at \(\Delta t=0.01\).
To apply the smooth non-polynomial potential construction,
we first extend the Morse potential to a periodic function.
Its periodic Fourier representation and implementation details are given
in Ref.~\cite{SuppMat}.

Figure~\ref{fig:independent-validation}(a),(b) shows
\(g^\ve(X)=\operatorname{Re}\rho^\ve(T;X,0.98X-0.01)\)
for \(\ve=10^{-2}\) and \(5\times10^{-4}\), respectively.
In panel~(b), these nodes do not resolve the oscillations,
yet the kernel values remain accurate. This evaluation exploits the
Weyl representation and its known coordinate map.
Panel~(c) reports errors in the mass density
\(n^\ve(T,x)=R^\ve(T,x,0)\) for seven wavelengths at the fixed
discretization. The error approaches a \(1.225\times10^{-2}\) plateau as
\(\ve\) decreases, consistent with \(\ve\)-uniform approximation
estimates. Panel~(d) shows substantially different densities at
\(\ve=10^{-4}\) and \(1\), despite identical initial Weyl profiles:
the small-\(\ve\) result cannot replace the finite-wavelength dynamics
at \(\ve=1\).

Independent reference solutions use a fine Fourier discretization in
both Weyl variables. Further numerical examples and implementation
details are provided in Ref.~\cite{SuppMat}.
These classical computations test the discretization. Quantum
implementation errors and resource bounds are controlled separately
by the estimates above.

\textbf{Summary.---}
The Weyl--Hermite algorithm takes  advantage of uniform regularity in Weyl
variables, with a quantum encoding of the full potential difference.
Exact projected Hermite moments treat the polynomial contribution,
while QSVT in an enlarged Hermite basis, followed by projection and
LCU, encodes generally dense Fourier potential couplings through
sparse coordinate matrices. This construction avoids dense-matrix
assembly, semiclassical Taylor truncation as done in \cite{FilbetGolse2025}, and inverse-\(\ve\)
normalization. Under the assumptions of
Theorem~\ref{thm:resource-main}, we achieve  \(\ve\)-independent
computational cost for physical observables, %
including simulation and readout errors,
with end-to-end resource bounds containing {\it no inverse powers}
of \(\ve\). Under additional dimension-stable assumptions, query
and gate costs are polynomial in the spatial dimension \(d\) for fixed time and accuracy,
with input preparation and observable normalization included. In particular, 
for the stated problem class, the algorithm obtains accurate physical 
observables without resolving numerically the possibly very small wave length.

\begin{acknowledgments}
SJ acknowledges support from NSFC Grant No.~12341104, the Shanghai Pilot
Program for Basic Research, the STCSM Grant No.~24LZ1401200, the Shanghai
Jiao Tong University 2030 Initiative, and the Fundamental Research Funds
for the Central Universities. CM was supported by the Fundamental and
Interdisciplinary Disciplines Breakthrough Plan of the Ministry of
Education of China (JYB2025XDXM112) and by NSFC Grant No.~12501607.
CM was also partially supported by STCSM Grant No.~22DZ2229014.
\end{acknowledgments}

\paragraph*{Data availability.---}
Data and source code for Fig.~\ref{fig:independent-validation} accompany
the Supplemental Material.

% ================================================================
% Supplemental Material: new page, single column, independent numbering.
% ================================================================
\clearpage
\onecolumn
\allowdisplaybreaks[2]
\renewcommand{\epsilon}{\varepsilon}
\setcounter{section}{0}
\setcounter{subsection}{0}
\setcounter{equation}{0}
\setcounter{figure}{0}
\setcounter{table}{0}
\setcounter{lemma}{0}
\renewcommand{\theequation}{S\arabic{equation}}
\renewcommand{\thefigure}{S\arabic{figure}}
\renewcommand{\thetable}{S\arabic{table}}
\renewcommand{\thelemma}{S\arabic{lemma}}
\renewcommand{\thesection}{\Roman{section}}
\renewcommand{\thesubsection}{\Alph{subsection}}
\setcounter{secnumdepth}{2}
\makeatletter
\renewcommand{\p@subsection}{\thesection.}
\makeatother
% Unique PDF destinations after the counters are reset.
\renewcommand{\theHequation}{supp.\arabic{equation}}
\renewcommand{\theHfigure}{supp.\arabic{figure}}
\renewcommand{\theHtable}{supp.\arabic{table}}
\renewcommand{\theHsection}{supp.\arabic{section}}
\renewcommand{\theHsubsection}{supp.\arabic{section}.\arabic{subsection}}
\providecommand{\theHlemma}{\arabic{lemma}}
\renewcommand{\theHlemma}{supp.\arabic{lemma}}
\phantomsection
\label{supplement:start}
\pdfbookmark[0]{Supplemental Material}{supplement}

\begin{center}
\large\bfseries
Supplemental Material for ``Wavelength-Uniform Quantum Algorithms for
Mixed-State Quantum Dynamics''\par
\vspace{1.2em}
\normalsize\normalfont Shi Jin\textsuperscript{1,2,3} and Chuwen Ma\textsuperscript{4,5,6}\par
\vspace{0.6em}
\small\itshape
\textsuperscript{1}School of Mathematical Sciences, Shanghai Jiao Tong University, Shanghai 200240, China\par
\textsuperscript{2}Institute of Natural Sciences, Shanghai Jiao Tong University, Shanghai 200240, China\par
\textsuperscript{3}MOE-LSC, Shanghai Jiao Tong University, Shanghai 200240, China\par
\textsuperscript{4}School of Mathematical Sciences, East China Normal University, Shanghai 200241, China\par
\textsuperscript{5}Key Laboratory of MEA, Ministry of Education, East China Normal University, Shanghai 200241, China\par
\textsuperscript{6}Shanghai Key Laboratory of PMMP, East China Normal University, Shanghai 200241, China\par
\normalfont (Dated: September 27, 2026)
\end{center}
\vspace{0.5em}

\noindent\textit{Organization.---}
Section~\ref{sec:finite} fixes the finite Weyl--Hermite system and its
Gauss--Hermite representation. Sections~\ref{sec:proof1} and
\ref{sec:proof2} state and prove, respectively,
Lemma~\ref{lem:fixed-resources} and Lemma~\ref{lem:uniform-selection}.
Section~\ref{sec:proof3} proves the main theorem of the Letter.
Section~\ref{sec:io} discusses input and readout, and
Sec.~\ref{sec:numerics} presents three numerical experiments on Weyl profiles,
physical densities, and a specified terminal-pulse propagation.
Equations, figures, and lemmas in this Supplemental Material carry the
prefix S. Cited references are listed at the end of this document.
We keep explicit all dependence on \(d\) needed by the resource
bounds. The polynomial-in-\(d\) conclusion is conditional on the
dimension-stable regularity, Fourier-access, and input--readout assumptions
stated in Secs.~\ref{sec:proof1}--\ref{sec:io}; no dimension-uniform PDE
regularity theorem is asserted.

Throughout, \(\alpha,\beta,\gamma,\zeta\in\mathbb N_0^d\) are
multi-indices, with \(|\alpha|=\sum_j\alpha_j\),
\(x^\alpha=\prod_jx_j^{\alpha_j}\), and
\(\partial_x^\beta=\prod_j\partial_{x_j}^{\beta_j}\), and analogously
in \(y\). Dimension-dependent constants carry a subscript \(d\) and
are kept explicit in the final resource bounds.

The shared notation agrees with the Letter: \(M\) is the number of spatial
grid points per direction, \(K\) is the retained Hermite size,
\(\widetilde K=K+L_{\rm buf}\) is the auxiliary Hermite size,
\(\mathcal Q\) is the paired Fourier-index set, and
\(N_{\rm ladder}\) counts queries to the Hermite coordinate block encodings.
In the one-dimensional numerical experiments, \(Q\) counts positive
Fourier modes, so \(|\mathcal Q|=2Q\) when all retained coefficients are
nonzero. The theoretical basis \(\Phi_k\) is unscaled; numerical uses of
\(\Phi_k^{(\ell)}\) and \(\mathsf Y_N^{(\ell)}=\ell\mathsf Y_N\)
are stated explicitly.

\section{Finite Weyl--Hermite system}\label{sec:finite}

\subsection{Weyl variables and Hermite projection}

We use the Weyl variables of the Letter:
\begin{equation}
 x=\frac{X+Y}{2},\qquad y=\frac{X-Y}{\ve},\qquad
 R^\ve(t,x,y)=\rho^\ve(t;x+\ve y/2,x-\ve y/2).
 \label{eq:change}
\end{equation}
The evolution equation and the symmetric potential difference are
\begin{align}
 \partial_tR^\ve
 &=\ii\sum_{j=1}^d\partial_{x_j}\partial_{y_j}R^\ve
 -\ii U_\ve(x,y)R^\ve,\qquad
 U_\ve(x,y)
 =\frac{V(x+\ve y/2)-V(x-\ve y/2)}{\ve}
 =\int_{-1/2}^{1/2}y\cdot\nabla V(x+s\ve y)\,\dd s.
 \label{eq:weyl-supp}
\end{align}
The Weyl formulation and its one-dimensional Hermite discretization
were studied in Ref.~\cite{sm:FilbetGolse2025}. The norm and observable
pairings satisfy
\begin{equation}
 \norm{R^\ve}_2^2=\ve^{-d}\norm{\widehat\rho^\ve}_{\rm HS}^2,
 \qquad
 \langle\check a,R^\ve\rangle
 =\operatorname{Tr}[\widehat\rho^\ve\operatorname{Op}^{\rm W}_\ve(a)],
 \label{eq:pairing}
\end{equation}
where \(\check a(x,y)=(2\pi)^{-d}\int_{\R^d}a(x,p)e^{\ii y\cdot p}\,\dd p\)
and \(\operatorname{Op}^{\rm W}_\ve\) denotes Weyl quantization with
semiclassical parameter \(\ve\). The inner product is conjugate-linear
in its first argument. Equation~\eqref{eq:pairing} holds for real
\(a\); complex symbols are treated componentwise.

The normalized Hermite functions are
\(\Phi_k(y)=H_k(y)e^{-y^2/2}/[\pi^{1/4}\sqrt{2^kk!}]\) and
\(\Phi_{\mathbf k}(y)=\prod_{j=1}^d\Phi_{k_j}(y_j)\), where
\(H_k\) is the degree-\(k\) Hermite polynomial orthogonal with
weight \(e^{-y^2}\) and leading coefficient \(2^k\). With
\(\Phi_{-1}=0\), the ladder identities
\begin{equation}
 y\Phi_k=\sqrt{\frac{k+1}{2}}\Phi_{k+1}+\sqrt{\frac{k}{2}}\Phi_{k-1},
 \qquad
 \partial_y\Phi_k=-\sqrt{\frac{k+1}{2}}\Phi_{k+1}
 +\sqrt{\frac{k}{2}}\Phi_{k-1}
 \label{eq:ladder}
\end{equation}
give, on \(\operatorname{span}\{\Phi_0,\ldots,\Phi_{K-1}\}\),
\begin{equation*}
 \Ymat_K=\sum_{k=0}^{K-2}\sqrt{\frac{k+1}{2}}
 (\ket{k+1}\bra{k}+\ket{k}\bra{k+1}),\qquad
 \Dmat_{y,K}=\sum_{k=0}^{K-2}\sqrt{\frac{k+1}{2}}
 (\ket{k}\bra{k+1}-\ket{k+1}\bra{k}).
\end{equation*}
Let \(I_K=\{0,\ldots,K-1\}^d\) and \(\iota_K\ket k=\Phi_k\).
Define \(P_K=\iota_K\iota_K^\dagger\),
\(\iota_{\mathbf K}=\iota_K^{\otimes d}\),
\(\Pi_K=P_K^{\otimes d}\), and \(Q_K=I-\Pi_K\). The retained space
\(\mathcal H_K^{(d)}=\operatorname{ran}\Pi_K
=\operatorname{span}\{\Phi_{\mathbf k}:\mathbf k\in I_K\}\)
has dimension \(K^d\).
The tensor basis and cutoff do not require separability of the solution
or potential: mixed derivatives \(\partial^\nu V_{\rm P}\), monomials
\(y^\nu\), and Fourier couplings \(\xi\cdot y\) retain all coordinate
interactions.
The exact compressed moments are
\begin{equation*}
 \Ymat_K^{[r]}:=\iota_K^\dagger y^r\iota_K,\qquad
 \Ymat_{\mathbf K}^{[\nu]}
 :=\bigotimes_{j=1}^d\Ymat_K^{[\nu_j]}.
\end{equation*}
In general, \(\Ymat_K^{[r]}\ne(\Ymat_K)^r\): the latter inserts a
projection onto the retained modes between consecutive multiplications.

\subsection{Gauss--Hermite representation}\label{sec:GH}
We first describe the construction in one dimension. Let
\(\widetilde K=K+L_{\rm buf}\), where \(L_{\rm buf}\) is the number of
additional Hermite modes beyond the retained \(K\) modes. Let
\(\{y_q,\omega_q^{\rm H}\}_{q=0}^{\widetilde K-1}\) be the
Gauss--Hermite nodes and positive weights for the weight \(e^{-y^2}\):
\begin{equation*}
 \int_{\R}e^{-y^2}f(y)\,\dd y
 \approx\sum_{q=0}^{\widetilde K-1}\omega_q^{\rm H}f(y_q).
\end{equation*}
The rule is exact for polynomials of degree at most
\(2\widetilde K-1\), and the nodes are the zeros of
\(H_{\widetilde K}\). Define
\begin{equation*}
 p_k(y):=\frac{H_k(y)}{\pi^{1/4}\sqrt{2^k k!}},\qquad
 \Phi_k(y)=e^{-y^2/2}p_k(y).
\end{equation*}
Thus \(\int_{\R}e^{-y^2}p_k(y)p_l(y)\,\dd y=\delta_{kl}\).
At a fixed spatial point \(x_i\), the exact retained potential matrix
and its quadrature approximation are
\begin{align*}
 [U_{\ve,K}(x_i)]_{kl}
 &=\int_{\R}e^{-y^2}U_\ve(x_i,y)p_k(y)p_l(y)\,\dd y
\approx\sum_{q=0}^{\widetilde K-1}\omega_q^{\rm H}
 U_\ve(x_i,y_q)p_k(y_q)p_l(y_q),\qquad 0\le k,l<K.
\end{align*}

The multiplication identity in Eq.~\eqref{eq:ladder} gives the
symmetric tridiagonal Hermite ladder matrix
\begin{equation*}
 \Ymat_{\widetilde K}
 =\sum_{k=0}^{\widetilde K-2}\sqrt{\frac{k+1}{2}}
 (\ket{k+1}\bra{k}+\ket{k}\bra{k+1}).
\end{equation*}
For \(\mathbf v(y_q):=(p_0(y_q),\ldots,
 p_{\widetilde K-1}(y_q))^\top\), the recurrence and
\(H_{\widetilde K}(y_q)=0\) give
\(\Ymat_{\widetilde K}\mathbf v(y_q)=y_q\mathbf v(y_q)\).
Let \(S\in\R^{\widetilde K\times\widetilde K}\) have entries
\(S_{kq}:=\sqrt{\omega_q^{\rm H}}p_k(y_q)\).
Since \(\deg(p_kp_l)\le2\widetilde K-2\), quadrature exactness gives
\begin{equation*}
 (SS^\top)_{kl}
 =\sum_q\omega_q^{\rm H}p_k(y_q)p_l(y_q)=\delta_{kl}.
\end{equation*}
Consequently,
\begin{equation*}
 \Ymat_{\widetilde K}
 =S\operatorname{diag}(y_0,\ldots,y_{\widetilde K-1})S^\top,
 \qquad
 U_\ve(x_i,\Ymat_{\widetilde K})
 :=S\operatorname{diag}_{q}[U_\ve(x_i,y_q)]S^\top.
\end{equation*}
In particular,
\([U_\ve(x_i,\Ymat_{\widetilde K})]_{kl}
 =\sum_q\omega_q^{\rm H}U_\ve(x_i,y_q)p_k(y_q)p_l(y_q)\).
For any \(N\ge K\), let
\(J_{K,N}:\mathbb C^K\to\mathbb C^N\) be the canonical embedding,
\(J_{K,N}\ket{k}=\ket{k}\). The retained Gauss--Hermite approximation
is therefore
\begin{equation*}
 U_{\ve,K}(x_i)
 \approx J_{K,\widetilde K}^{\dagger}
 U_\ve(x_i,\Ymat_{\widetilde K})J_{K,\widetilde K}.
\end{equation*}
This matrix function equals the quadrature approximation; it need not
equal the exact projected multiplication operator.

For a monomial \(y^r\), however, an \(N\)-point rule is exact for
all retained matrix entries whenever
\(r+2K-2\le2N-1\), equivalently
\(N\ge K+\lfloor r/2\rfloor\). Applying the same diagonalization
at size \(N\) therefore gives the exact identity
\begin{equation}
 \Ymat_K^{[r]}=J_{K,N}^\dagger\Ymat_N^rJ_{K,N},\qquad
 N\ge K+\lfloor r/2\rfloor.
 \label{eq:GH-exact-moments}
\end{equation}
The polynomial block in Eq.~\eqref{eq:poly-block} uses these exact
moments, with an auxiliary size \(N\) large enough for each required
power. This choice of \(N\) is independent of the auxiliary buffer for
\(V_{\rm F}\): when \(V_{\rm F}=0\), this buffer may vanish while the
polynomial moments are still evaluated exactly. For the smooth
non-polynomial contribution, tensor products
of the one-dimensional construction give
\(J_d=J_{K,\widetilde K}^{\otimes d}\) and the finite matrix function
in Eq.~\eqref{eq:Ffs-proof}; its difference from the exact projected
operator is bounded in Eqs.~\eqref{eq:pathidentity}--\eqref{eq:buffer}.

\subsection{Finite differences and potential blocks}

For a banded matrix
\(A=\sum_\eta\sum_km_\eta(k)\ket{k+\eta}\bra k\), set
\begin{equation}
 \mathfrak b(A):=\sum_\eta\max_k|m_\eta(k)|.
 \label{eq:bdef}
\end{equation}
An \(r\)-step Hermite expansion has \(r+1\) parity-compatible net shifts and at most \(2^r\)
paths, while each ladder factor is bounded by \(\sqrt{(K+r)/2}\).
Hence
\begin{equation}
 \mathfrak b(\Ymat_K^{[r]})
 \le(r+1)2^r\left(\frac{K+r}{2}\right)^{r/2}
 \le C_rK^{r/2},\qquad
 \mathfrak b(\Ymat_{\mathbf K}^{[\nu]})\le
C_{|\nu|}K^{|\nu|/2}.
 \label{eq:moment-b}
\end{equation}
Here we use the tensor-product extension of
\(\mathfrak b\), for which
\(\mathfrak b(A\otimes B)=\mathfrak b(A)\mathfrak b(B)\). Hence
\(C_{|\nu|}\) depends only on the total degree \(|\nu|\), not on \(d\).

Let \(I_M=\{0,\ldots,M-1\}^d\) index a periodic grid of spacing \(h\)
on the box \(\Omega_x\). For an \(r_x\)th-order centered
difference of half-width \(L_x\), let
\begin{align}
 \Dmat_{x,h}
 &=h^{-1}\sum_{\ell=1}^{L_x}c_\ell^{(r_x)}(S_-^\ell-S_+^\ell),
 \qquad \lambda_x^{(r_x)}=2\sum_{\ell=1}^{L_x}|c_\ell^{(r_x)}|,
\qquad
 \Dmat_{x_j,h}
 :=\mathbb I_M^{\otimes(j-1)}\otimes\Dmat_{x,h}
 \otimes\mathbb I_M^{\otimes(d-j)}.
 \label{eq:stencil}
\end{align}
Here \(S_\pm\ket i=\ket{i\pm1\pmod M}\),
\(c_\ell^{(r_x)}\in\R\), and \(\mathbb I_n\) denotes the
\(n\times n\) identity matrix. The centered-difference coefficients
satisfy \(2\sum_{\ell=1}^{L_x}\ell c_\ell^{(r_x)}=1\) and the
order-\(r_x\) consistency conditions, so that \(\Dmat_{x,h}\)
approximates \(+\partial_x\). For retained indices
\(0\le k_j<K\), set
\begin{equation*}
 R_{\mathbf k}(t,x)=\int_{\R^d}R(t,x,y)\Phi_{\mathbf k}(y)\,\dd y,
 \qquad
 (\mathcal S_hR)_{\mathbf i,\mathbf k}=h^{d/2}R_{\mathbf k}(x_{\mathbf i}).
\end{equation*}
Thus \(\mathcal S_hR\in\mathbb C^{M^dK^d}\), and
\(\mathbf R_0\) contains the sampled initial coefficients.
Section~\ref{sec:proof2} specifies the restricted and periodized initial
data and the associated restriction, sampling, and normalization errors.
The tensor lift is
\(\Dmat_{y_j,K}=\mathbb I_K^{\otimes(j-1)}\otimes\Dmat_{y,K}
\otimes\mathbb I_K^{\otimes(d-j)}\).
Then
\begin{equation}
 H_{\rm tr}=-\sum_{j=1}^d\Dmat_{x_j,h}\otimes\Dmat_{y_j,K},\qquad
 \mathfrak b(\Dmat_{y,K})\le\sqrt{2K}.
 \label{eq:transport}
\end{equation}

For the retained indices \(\mathbf k,\mathbf l\in I_K\), define
\begin{equation*}
 [U_{\ve,K}(x)]_{\mathbf k\mathbf l}
 :=\int_{\R^d}U_\ve(x,y)
 \Phi_{\mathbf k}(y)\Phi_{\mathbf l}(y)\,\dd y,
 \qquad
 U^{\rm ex}_{\ve,h,K}
 :=\sum_{\mathbf i\in I_M}\ket{\mathbf i}\bra{\mathbf i}
 \otimes U_{\ve,K}(x_{\mathbf i}).
\end{equation*}
Thus the exact semidiscrete evolution is
\(\ii\partial_t\mathbf R=(H_{\rm tr}+U^{\rm ex}_{\ve,h,K})\mathbf R\).

For the split \(V=V_{\rm P}+V_{\rm F}\), with
\(V_{\rm P}\) of fixed total degree \(D_{\rm P}\) and
\(V_{\rm F}\) a smooth non-polynomial potential, write
\(U_\ve=U_\ve^{\rm P}+U_\ve^{\rm F}\). Set
\(\qP=\lfloor(D_{\rm P}-1)/2\rfloor\). The exact polynomial expansion is
\begin{equation}
 U_\ve^{\rm P}(x,y)
 =\sum_{m=0}^{\qP}\frac{\ve^{2m}}{2^{2m}}
 \sum_{|\nu|=2m+1}
 \frac{(\partial^\nu V_{\rm P})(x)y^\nu}{\nu!}.
 \label{eq:poly-U-supp}
\end{equation}
Consequently, its exact retained block is
\begin{equation}
 U^{\rm P}_{\ve,h,K}
 =\sum_{m=0}^{\qP}\frac{\ve^{2m}}{2^{2m}}
 \sum_{|\nu|=2m+1}\frac{\mathsf M_{\nu,V_{\rm P}}
 \otimes\Ymat_{\mathbf K}^{[\nu]}}{\nu!},
 \label{eq:poly-block}
\end{equation}
where \(\mathsf M_{\nu,V_{\rm P}}\ket{\mathbf i}
=(\partial^\nu V_{\rm P})(x_{\mathbf i})\ket{\mathbf i}\) and
\(\qP\) is as above.
For \(V_{\rm P}=0\), take \(D_{\rm P}=0\); all polynomial sums are
empty when \(D_{\rm P}=0\). We use
\(\Gamma_{m,d}^{\rm P}(\Omega_x)=\sum_{|\nu|=2m+1}
\norm{\partial^\nu V_{\rm P}}_{L^\infty(\Omega_x)}/\nu!\).

For \(V_{\rm F}(x)=(2\pi)^{-d}\int_{\R^d}\widehat V_{\rm F}(\xi)
e^{\ii\xi\cdot x}\,\dd\xi\), where \(\widehat V_{\rm F}\) denotes
the Fourier transform, choose paired nodes \(\xi_{-q}=-\xi_q\) and
positive weights \(w_{-q}=w_q\), and set
\(a_q=w_q\widehat V_{\rm F}(\xi_q)/(2\pi)^d\).
For a periodic potential, \(a_q\) is the corresponding Fourier
coefficient. Omitting the dynamically irrelevant zero mode gives
\begin{equation*}
 V_{\mathcal Q}^{\rm F}(x)=\sum_{q\in\mathcal Q}a_qe^{\ii\xi_q\cdot x},
 \qquad \xi_q\ne0,\qquad a_{-q}=\overline{a_q},
\end{equation*}
where conjugate pairing follows from the reality of \(V_{\rm F}\).
The Fourier weights \(w_q\) are distinct from the Gauss--Hermite
weights \(\omega_q^{\rm H}\) of Sec.~\ref{sec:GH}.
With \(\operatorname{sinc}z=\sin(z)/z\), continuously extended at
\(z=0\), define
\begin{align*}
 (\mathcal Y_jf)(y)&=y_jf(y),\qquad
 \mathcal A_q=\sum_{j=1}^d\xi_{q,j}\mathcal Y_j,\qquad
 f_\ve(z)=z\operatorname{sinc}(\ve z/2),\\
 \mathsf F_{q,K}^{\rm full}&=\iota_{\mathbf K}^\dagger f_\ve(\mathcal A_q)
 \iota_{\mathbf K},\qquad
 \mathsf E_q=\sum_{\mathbf i}e^{\ii\xi_q\cdot x_{\mathbf i}}
 \ket{\mathbf i}\!\bra{\mathbf i}.
\end{align*}
The exact Fourier target for the selected quadrature is
\begin{equation*}
 U^{\rm F,tar}_{\ve,h,K}=\ii\sum_{q\in\mathcal Q}
 a_q\mathsf E_q\otimes \mathsf F_{q,K}^{\rm full}.
\end{equation*}
Let \(\mathcal M_g\) denote multiplication by \(g\). The exact
retained-sector compression of the untruncated component is
\begin{equation*}
 U^{\rm F,ex}_{\ve,h,K}
 =\sum_{\mathbf i\in I_M}\ket{\mathbf i}\!\bra{\mathbf i}\otimes
 \iota_{\mathbf K}^\dagger
 \mathcal M_{U_\ve^{\rm F}(x_{\mathbf i},\cdot)}
 \iota_{\mathbf K}.
\end{equation*}

To approximate \(\mathsf F_{q,K}^{\rm full}\), use the enlarged size
\(\widetilde K=K+L_{\rm buf}\) and the tensor embedding
\(J_d=J_{K,\widetilde K}^{\otimes d}:\mathbb C^{K^d}
\to\mathbb C^{\widetilde K^d}\) from Sec.~\ref{sec:GH}. Define
\begin{align}
 \Ymat_{j,\widetilde K}
 =\mathbb I_{\widetilde K}^{\otimes(j-1)}\otimes\Ymat_{\widetilde K}
 \otimes \mathbb I_{\widetilde K}^{\otimes(d-j)},
 \qquad
 \mathsf A_{q,\widetilde K}=\sum_j\xi_{q,j}\Ymat_{j,\widetilde K},
 \nonumber\\
 \beta_q=\sqrt{2\widetilde K}\norm{\xi_q}_1,
 \qquad \beta_{\max}=\max_{q\in\mathcal Q}\beta_q,
 \quad\max\varnothing:=0.
 \label{eq:Atilde}
\end{align}

Set
\begin{equation}
 \mathsf F_{q,K}^{\rm fs}
 :=J_d^\dagger f_\ve(\mathsf A_{q,\widetilde K})J_d,
 \qquad
 U^{\rm F,fs}_{\ve,h,K}
 :=\ii\sum_{q\in\mathcal Q}a_q\mathsf E_q\otimes\mathsf F_{q,K}^{\rm fs}.
 \label{eq:Ffs-proof}
\end{equation}
These are the finite-section matrices used in the Letter.
The matrix \(\mathsf F_{q,K}^{\rm full}\) is the exact projected Fourier
matrix used to quantify the finite-section error.

For a real odd polynomial used in quantum singular value transformation
(QSVT), denoted by \(p_{q,\ve}\), set
\begin{equation*}
 \widetilde{\mathsf F}_{q,K}:=J_d^\dagger\beta_q
 p_{q,\ve}(\mathsf A_{q,\widetilde K}/\beta_q)J_d,\qquad
 U^{\rm F,QSVT}_{\ve,h,K}:=\ii\sum_qa_q\mathsf E_q\otimes
 \widetilde{\mathsf F}_{q,K}.
\end{equation*}
At finite precision, paired data are generated from one representative
by exact conjugation or sign reversal:
\(\widetilde a_{-q}=\overline{\widetilde a_q}\),
\(\widetilde\xi_{-q}=-\widetilde\xi_q\), and
\(\widetilde{\mathsf E}_{-q}=\widetilde{\mathsf E}_q^\dagger\).
The paired modes share a real odd QSVT polynomial and adjoint-paired
circuits, so the implemented generator for \(V_{\rm F}\) remains Hermitian.
Alternatively, taking its Hermitian part cannot increase the
operator-norm error from the exact Hermitian target.
The aggregate physical operator defect from finite-precision
coefficients, oracle values, and phases is denoted by \(\eta_{\rm arith}\).

For the proof, distinguish the exact Hamiltonian, the target using
selected Fourier data with exact Hermite projection, and the implemented
Hamiltonian:
\begin{align}
 H^{\rm ex}_{\ve,h,K}
 &=H_{\rm tr}+U^{\rm P}_{\ve,h,K}+U^{\rm F,ex}_{\ve,h,K},
 \qquad
 H^{\rm tar}_{\ve,h,K}
 =H_{\rm tr}+U^{\rm P}_{\ve,h,K}+U^{\rm F,tar}_{\ve,h,K},
\qquad
 H_{\rm WH}
 =H_{\rm tr}+U^{\rm P}_{\ve,h,K}+U^{\rm F,QSVT}_{\ve,h,K},
 \nonumber\\
 \eta_{\rm F}&=\norm{H^{\rm ex}_{\ve,h,K}-H^{\rm tar}_{\ve,h,K}},
 \qquad
 \eta_{\rm syn}=\norm{H^{\rm tar}_{\ve,h,K}-H_{\rm WH}},
 \qquad
 \eta_{\rm pot}=\norm{H^{\rm ex}_{\ve,h,K}-H_{\rm WH}}
 \le\eta_{\rm F}+\eta_{\rm syn}.
 \label{eq:Hlayers}
\end{align}
Indeed,
\(\Dmat_{x_j,h}^\dagger=-\Dmat_{x_j,h}\) and
\(\Dmat_{y_j,K}^\dagger=-\Dmat_{y_j,K}\), the exact Hermite moments are real
symmetric, and the Fourier modes are conjugate-paired. Hence all three
terms in \(H_{\rm WH}\) are Hermitian.

Here \(U^{\rm ex}_{\ve,h,K}=U^{\rm P}_{\ve,h,K}+U^{\rm F,ex}_{\ve,h,K}\).
The intermediate target separates Fourier quadrature error from
finite-section and QSVT implementation errors.

For a real Weyl symbol \(a\), set
\begin{align}
 (\mathbf b_a)_{\mathbf i,\mathbf k}
 &=h^{d/2}\int_{\R^d}\check a(x_{\mathbf i},y)
 \Phi_{\mathbf k}(y)\,\dd y,
 \qquad
 A^\ve(T)=\langle\check a,R^\ve(T)\rangle_{L^2_{x,y}},
 \qquad
 A_{\rm WH}(T)=\langle\mathbf b_a,e^{-\ii TH_{\rm WH}}\mathbf R_0\rangle.
 \label{eq:observables}
\end{align}
We write \(s_R=\norm{\mathbf R_0}_2\) and \(s_a=\norm{\mathbf b_a}_2\).
For the uniform observable guarantees, we assume
\(s_R,s_a,\norm{\check a}_{L^2_{x,y}}=O(1)\), uniformly in
\(\ve\) and \(d\), along the accuracy-selected family.
These input--readout assumptions are distinct from the fixed-discretization
block-encoding bounds. State preparation and overlap estimation are
addressed in Secs.~\ref{sec:proof3} and \ref{sec:io}.

\section{Simulation at fixed discretization}
\label{sec:proof1}

\begin{lemma}[Simulation at fixed discretization]
\label{lem:fixed-resources}
Fix \(T>0\), the stencil order \(r_x\), the polynomial degree
\(D_{\rm P}\), and the discretization parameters
\(h,K,\mathcal Q,L_{\rm buf}\) of Sec.~\ref{sec:finite}.
Assume reversible evaluation of polynomial derivatives, Fourier
coefficients, frequencies, and grid phases, coherent preparation of the
Fourier weights, and access to the QSVT phases selected by the Fourier
index, with primitive costs uniform in \(\ve\).
Let \(\widetilde K=K+L_{\rm buf}\) and
\(\mathcal V_{1,\mathcal Q}=\sum_{q\in\mathcal Q}|a_q|\norm{\xi_q}_1\).
Then the block encoding of \(H_{\rm WH}\) has normalization
\begin{align}
 \alpha_{\rm WH}
 =O_{r_x,D_{\rm P}}\!\Big(&dh^{-1}K^{1/2}
 +\sum_{m=0}^{\qP}\ve^{2m}K^{m+1/2}
 \Gamma_{m,d}^{\rm P}(\Omega_x)+\sqrt{\widetilde K}\,\mathcal V_{1,\mathcal Q}\Big).
 \label{eq:lemma-fixed-alpha}
\end{align}
For \(0<\delta_{\rm sim}<1\), simulation with operator-norm error
at most \(\delta_{\rm sim}\) uses
\begin{equation}
 N_{\rm sim}=\wtO\!\left(T\alpha_{\rm WH}
 +\log\delta_{\rm sim}^{-1}\right)
 \label{eq:lemma-fixed-simulation}
\end{equation}
queries to this block encoding.
\end{lemma}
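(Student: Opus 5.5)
The plan is to build the block encoding of \(H_{\rm WH}\) as a linear combination of three block-encoded pieces, \(H_{\rm tr}\), \(U^{\rm P}_{\ve,h,K}\) and \(U^{\rm F,QSVT}_{\ve,h,K}\), and to bound each normalization separately. Under LCU the normalizations add, so \(\alpha_{\rm WH}\) is at most the sum of the three. Once \(\alpha_{\rm WH}\) is fixed, the simulation count \eqref{eq:lemma-fixed-simulation} follows directly from qubitization with quantum signal processing \cite{LowChuang2019}: for a Hermitian matrix with an \(\alpha\)-normalized block encoding, \(e^{-\ii TH}\) is produced to operator error \(\delta_{\rm sim}\) with \(O(T\alpha+\log\delta_{\rm sim}^{-1}/\log\log\delta_{\rm sim}^{-1})\) queries. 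This is the \(\wtO\) form. The Hermiticity of \(H_{\rm WH}\) was already established after Eq.~\eqref{eq:Hlayers}.

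\emph{Transport term.} Each summand \(\Dmat_{x_j,h}\otimes\Dmat_{y_j,K}\) is a tensor product of banded matrices. The stencil \eqref{eq:stencil} is a sum of shift operators \(S_\pm^\ell\) with weights \(h^{-1}c_\ell\). Its \(\mathfrak b\)-value is therefore \(h^{-1}\lambda_x^{(r_x)}\), and it block-encodes through an LCU of unitary shifts. \(\Dmat_{y,K}\) is a two-diagonal matrix with \(\mathfrak b\le\sqrt{2K}\) by \eqref{eq:transport}. For banded matrices, block encoding each diagonal via a controlled shift and a diagonal-amplitude oracle gives normalization \(\mathfrak b\). Taking the product, then the LCU over the \(d\) directions, gives \(d\,h^{-1}\lambda_x^{(r_x)}\sqrt{2K}=O_{r_x}(dh^{-1}K^{1/2})\).

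\emph{Polynomial term.} I would start from \eqref{eq:poly-block}. Each term \(\mathsf M_{\nu,V_{\rm P}}\otimes\Ymat^{[\nu]}_{\mathbf K}\) is a diagonal matrix tensored with a tensor product of banded exact moments. The diagonal part is encoded by reversible evaluation of \(\partial^\nu V_{\rm P}\) at the grid points followed by a controlled rotation, with normalization \(\norm{\partial^\nu V_{\rm P}}_{L^\infty(\Omega_x)}\). The moment part has normalization \(\mathfrak b(\Ymat^{[\nu]}_{\mathbf K})\le C_{2m+1}K^{m+1/2}\) by \eqref{eq:moment-b}, with its entries computed exactly through \eqref{eq:GH-exact-moments}. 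Summing with LCU weights \(\ve^{2m}2^{-2m}/\nu!\) gives \(\sum_m\ve^{2m}K^{m+1/2}\Gamma^{\rm P}_{m,d}\), with constants depending only on \(D_{\rm P}\).

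\emph{Fourier term (main obstacle).} Block-encode \(\mathsf A_{q,\widetilde K}/\beta_q\) as an LCU over \(j\) of \(\Ymat_{j,\widetilde K}\). Each \(\Ymat_{\widetilde K}\) has \(\mathfrak b\le\sqrt{2\widetilde K}\), which gives normalization \(\beta_q\). Apply QSVT with the real odd polynomial \(p_{q,\ve}\), chosen to approximate \(z\mapsto\beta_q^{-1}f_\ve(\beta_qz)\) on \([-1,1]\) with sup-norm at most \(1\). This is possible because \(|f_\ve(z)|\le|z|\), so the target function is bounded by \(|z|\le1\). Compressing by \(J_d\) (a projector on the ancilla-free Hermite registers) gives a block encoding of \(\widetilde{\mathsf F}_{q,K}/\beta_q\). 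Tensoring with the diagonal unitary \(\mathsf E_q\) and forming the LCU over \(q\) with weights \(|a_q|\beta_q\) and phases \(\ii a_q/|a_q|\) gives normalization \(\sum_q|a_q|\beta_q=\sqrt{2\widetilde K}\,\mathcal V_{1,\mathcal Q}\). The delicate points are the following. First, the index-controlled QSVT must use a common degree (pad to \(n_{\rm F}\)) so that a single controlled circuit serves all \(q\); the assumed access to index-selected phases covers this. Second, the sup-norm bound on \(p_{q,\ve}\) must be kept at \(1\), not \(1+\) error, possibly by rescaling by a constant factor. Third, the conjugate pairing must be preserved so that the output stays Hermitian. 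Only the normalization matters for this lemma; the degree \(n_{\rm F}\) affects ladder query counts and is handled elsewhere. Adding the three normalizations gives \eqref{eq:lemma-fixed-alpha}, and qubitization then gives \eqref{eq:lemma-fixed-simulation}.
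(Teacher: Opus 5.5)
Your proposal is correct and follows essentially the same route as the paper: shifted-diagonal (PREPARE/SELECT) block encodings with normalizations $\mathfrak b(\cdot)$ for the transport and exact-moment polynomial terms; QSVT of $\mathsf A_{q,\widetilde K}/\beta_q$ with a bounded real odd polynomial approximating $z\operatorname{sinc}(\ve\beta_q z/2)$; compression by $J_d$; an outer LCU with weights $|a_q|\beta_q$, giving $\sqrt{2\widetilde K}\,\mathcal V_{1,\mathcal Q}$; and qubitization. The one step the paper adds is control of the finite-precision defect $\eta_{\rm BE}$ of the implemented block encoding, which it absorbs via Duhamel's identity by choosing $T\eta_{\rm BE}\le\delta_{\rm sim}/2$ together with QSP error $\delta_{\rm sim}/2$; your exact-arithmetic qubitization step should include this as well.
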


The notation \(\wtO\) suppresses logarithmic factors. If the Fourier set
is nonempty, each full-Hamiltonian block query uses \(O(n_{\rm F})\)
queries to the Hermite coordinate block encodings, with \(n_{\rm F}\)
given in Eq.~\eqref{eq:nF}. For an empty Fourier set, these inner queries
are omitted.

\begin{proof}
Using the stated access assumptions, direct reversible data access and
standard arithmetic prepare
\begin{equation*}
 \ket q\ket0\longmapsto\ket q\sum_{j=1}^d
 \sqrt{\frac{|\xi_{q,j}|}{\norm{\xi_q}_1}}\,
 \ket j\ket{\operatorname{sgn}(\xi_{q,j})}
\end{equation*}
with \(O(d)\) controlled rotations at the selected precision; a more
succinct data oracle may reduce this cost. All primitive costs are uniform
in \(\ve\) and enter the gate bound in Section~\ref{sec:proof3}.

For the shifted-diagonal representation in Eq.~\eqref{eq:bdef}, the
standard PREPARE/SELECT construction block encodes \(A\) with
normalization \(\mathfrak b(A)\), using one controlled modular shift and
one controlled amplitude rotation per selected diagonal
\cite{sm:Gilyen2019}; the tensor-product construction is identical.

From Eqs.~\eqref{eq:stencil}, \eqref{eq:transport}, and \eqref{eq:moment-b},
\begin{align*}
 \alpha_{\rm tr}
 &=d\lambda_x^{(r_x)}h^{-1}\mathfrak b(\Dmat_{y,K})
 =O_{r_x}(dh^{-1}K^{1/2}),\\
 \alpha_{\rm pot}^{\rm P}
 &\le\sum_{m=0}^{\qP}\frac{\ve^{2m}}{2^{2m}}
 \sum_{|\nu|=2m+1}
 \frac{\norm{\partial^\nu V_{\rm P}}_{L^\infty(\Omega_x)}}{\nu!}
 \mathfrak b(\Ymat_{\mathbf K}^{[\nu]})
=O_{D_{\rm P}}\!\left(\sum_{m=0}^{\qP}
 \ve^{2m}K^{m+1/2}\Gamma_{m,d}^{\rm P}(\Omega_x)\right).
\end{align*}

By the multinomial theorem,
\begin{align}
 \Gamma_{m,d}^{\rm P}(\Omega_x)
 &\le \frac{d^{2m+1}}{(2m+1)!}
 \max_{|\nu|=2m+1}
 \norm{\partial^\nu V_{\rm P}}_{L^\infty(\Omega_x)}.
 \label{eq:Gamma-d-bound}
\end{align}
For fixed \(D_{\rm P}\), the number of polynomial multi-indices is
\(N_{{\rm mi},d}=\sum_{m=0}^{\qP}\binom{d+2m}{2m+1}\), hence polynomial
in \(d\).

The linear combination of unitaries (LCU) for the shifted diagonals
of \(\mathsf A_{q,\widetilde K}\) has
normalization \(\beta_q\). Set \(\theta_q=\ve\beta_q/2\) and
\begin{equation*}
 g_{q,\ve}(s)=s\operatorname{sinc}(\theta_qs)
 =\begin{cases}\sin(\theta_qs)/\theta_q,&\theta_q\ne0,\\
 s,&\theta_q=0,
 \end{cases}
 \qquad |g_{q,\ve}(s)|\le1\quad(s\in[-1,1]).
\end{equation*}

For \(0<\eta\le1/4\), the standard Jacobi--Anger truncation,
followed by a \(1+O(\eta)\) rescaling and a constant-factor tightening of
the truncation tolerance, together with the usual Bessel
tail bound, yields a real odd polynomial satisfying
\begin{equation}
 \norm{p_{q,\ve}}_{L^\infty[-1,1]}\le1,\qquad
 \norm{p_{q,\ve}-g_{q,\ve}}_{L^\infty[-1,1]}\le\eta,
 \qquad
 \deg p_{q,\ve}=O(\ve\beta_q+\log\eta^{-1}).
 \label{eq:qsvtdegree}
\end{equation}
The estimate remains uniform as \(\theta_q\to0\), as follows directly
from the power series of \(J_{2m+1}(\theta_q)/\theta_q\).

QSVT therefore block encodes
\(f_\ve(\mathsf A_{q,\widetilde K})/\beta_q\) with normalized error
\(\eta\), equivalently with physical operator error
\(\beta_q\eta\) \cite{sm:LowChuang2017,sm:Gilyen2019}.

Finite Hermite combinations are analytic vectors for
\(\mathcal A_q\), so the following entire series may be compressed
termwise on \(\operatorname{ran}\Pi_K\). The finite-section error follows
from
\begin{equation}
 f_\ve(A)=\sum_{m\ge0}
 \frac{(-1)^m\ve^{2m}A^{2m+1}}{2^{2m}(2m+1)!}.
 \label{eq:sincseries}
\end{equation}

Each factor of \(\mathcal A_q\) changes exactly one Hermite index by
\(\pm1\). A path starting and ending in \(I_K\) that
visits an omitted level \(k_j\ge\widetilde K=K+L_{\rm buf}\) must contain
at least \(L_{\rm buf}+1\) raising and \(L_{\rm buf}+1\) lowering
steps in the same coordinate. Hence no odd path of degree
\(r\le2L_{\rm buf}+1\) detects the boundary, and the first possibly
different odd power is \(R:=2L_{\rm buf}+3\):

\begin{equation}
 \iota_{\mathbf K}^\dagger\mathcal A_q^r\iota_{\mathbf K}
 =J_d^\dagger\mathsf A_{q,\widetilde K}^rJ_d,\qquad
 r=1,3,\ldots,2L_{\rm buf}+1.
 \label{eq:pathidentity}
\end{equation}
The ladder bounds
\begin{equation*}
 \norm{\mathcal A_q^r\Pi_K}
 \le[\norm{\xi_q}_1\sqrt{2(K+r)}]^r,\qquad
 \norm{\mathsf A_{q,\widetilde K}}\le\beta_q
\end{equation*}
and Eq.~\eqref{eq:sincseries} yield
\begin{align}
 \delta_{q,K}^{\rm fs}
 &:=\norm{\mathsf F_{q,K}^{\rm full}-\mathsf F_{q,K}^{\rm fs}}%
 \le\sum_{\substack{r=2m+1>2L_{\rm buf}+1}}
 \frac{\ve^{r-1}}{2^{r-1}r!}
 \left([\norm{\xi_q}_1\sqrt{2(K+r)}]^r+\beta_q^r\right).
 \label{eq:fserror}
\end{align}

Write \(B=\norm{\xi_q}_1\), \(R=2L_{\rm buf}+3\), and denote the two
summands in Eq.~\eqref{eq:fserror} by \(u_r\) and \(v_r\).
For odd \(r\ge R\), direct division gives
\begin{equation*}
 \max\!\left\{\frac{u_{r+2}}{u_r},\frac{v_{r+2}}{v_r}\right\}
 \le C\ve^2B^2\frac{K+r}{r^2}.
\end{equation*}
Thus \(R\ge C_0(\ve^2B^2+\ve B\sqrt K)\), with
\(C_0\) sufficiently large, makes both ratios at most a universal
\(\rho<1\). Stirling's inequality bounds the first tail terms by
\(C(1+B\sqrt K)e^{-cR}\); summing the two geometric odd tails gives
\begin{equation*}
 \delta_{q,K}^{\rm fs}
 \le\frac{u_R+v_R}{1-\rho}
 \le C(1+B\sqrt K)e^{-cL_{\rm buf}}
 \quad\text{if}\quad
 L_{\rm buf}\ge C(\ve^2B^2+\ve B\sqrt K).
\end{equation*}
Here \(C,c,C_0,\rho\) are independent of
\(d,q,K,\ve\); the dimension enters only through
\(B=\norm{\xi_q}_1\). Consequently, for
\(0<\eta_{{\rm fs},q}\le1\), a convenient sufficient condition is
\begin{equation*}
 L_{\rm buf}\ge C\!\left[
 \ve\norm{\xi_q}_1\sqrt{K+L_{\rm buf}}
 +\log\frac{1+\norm{\xi_q}_1\sqrt K}{\eta_{{\rm fs},q}}
 \right].
\end{equation*}
Using \(2ab\le a^2+b^2\) to solve for \(L_{\rm buf}\) yields
\begin{equation}
 L_{\rm buf}\ge C\!\left[
 \ve^2\norm{\xi_q}_1^2+\ve\norm{\xi_q}_1\sqrt K
 +\log\frac{1+\norm{\xi_q}_1\sqrt K}{\eta_{{\rm fs},q}}
 \right]
\Longrightarrow\delta_{q,K}^{\rm fs}\le\eta_{{\rm fs},q}.
 \label{eq:buffer}
\end{equation}

If the selected set \(\mathcal Q\) is empty after zero-weight
modes are discarded, we omit the Fourier block and set
\(A_0=\alpha_{\rm pot}^{\rm F}=\beta_{\max}=n_{\rm F}=0\). Otherwise
\(A_0>0\), and for target budgets
\(\eta_{\rm fs},\eta_{\rm QSVT}>0\), let
\begin{equation*}
 A_0=\sum_{q\in\mathcal Q}|a_q|,\qquad
 \overline\eta_{\rm fs}=
 \min\!\left\{\frac14,\frac{\eta_{\rm fs}}{A_0}\right\},
 \qquad \eta_{{\rm fs},q}=\overline\eta_{\rm fs}.
\end{equation*}
A common buffer satisfying Eq.~\eqref{eq:buffer} for every mode then gives
\begin{equation*}
 \sum_q|a_q|\delta_{q,K}^{\rm fs}
 \le A_0\overline\eta_{\rm fs}\le\eta_{\rm fs}.
\end{equation*}
\begin{equation}
 \alpha_{\rm pot}^{\rm F}
 =\sum_{q\in\mathcal Q}|a_q|\beta_q
 =\sqrt{2\widetilde K}\,\mathcal V_{1,\mathcal Q},\qquad
 \mathcal V_{1,\mathcal Q}:=\sum_q|a_q|\norm{\xi_q}_1.
 \label{eq:aF}
\end{equation}
The outer Fourier LCU then prepares
\begin{equation*}
 \operatorname{PREPARE}_{\rm F}\ket0
 =\frac1{\sqrt{\alpha_{\rm pot}^{\rm F}}}
 \sum_{q\in\mathcal Q}\sqrt{|a_q|\beta_q}\ket q .
\end{equation*}
SELECT applies the coefficient phase, \(\mathsf E_q\), the global factor
\(\ii\), and the mode-multiplexed QSVT sequence. For a common QSVT
tolerance \(\eta\),
\begin{align*}
 \norm{\mathsf F_{q,K}^{\rm full}-\widetilde{\mathsf F}_{q,K}}
 &\le\delta_{q,K}^{\rm fs}+\beta_q\eta
 +\delta_q^{\rm arith},
 \quad
 \sum_q|a_q|\delta_{q,K}^{\rm fs}\le\eta_{\rm fs},
 \quad
 \sum_q|a_q|\delta_q^{\rm arith}\le\eta_{\rm arith},
 \quad
 \eta=\min\!\left\{\frac14,
 \frac{\eta_{\rm QSVT}}{\alpha_{\rm pot}^{\rm F}}\right\}.
\end{align*}
Here \(\delta_q^{\rm arith}\) includes finite-precision
errors in the modewise QSVT phases, Fourier coefficients and frequencies,
coordinate block encodings, and grid phases. The primitive precisions are
tightened according to the relevant QSVT degree so that robust QSVT error
propagation places their weighted sum within the aggregate physical
operator budget \(\eta_{\rm arith}\).
Consequently
\(\eta_{\rm syn}\le
\eta_{\rm fs}+\eta_{\rm QSVT}+\eta_{\rm arith}\), and
Eq.~\eqref{eq:Hlayers} gives
\begin{equation*}
 \eta_{\rm pot}\le
 \eta_{\rm F}+\eta_{\rm fs}+\eta_{\rm QSVT}+\eta_{\rm arith}.
\end{equation*}
Here \(\eta_{\rm arith}\) belongs to the selected potential-generator
approximation, whereas the later \(\eta_{\rm BE}\) denotes only a
further full-Hamiltonian block-encoding defect.
We use \(\delta_{\rm pot}\) with the same meaning as in the Letter:
it bounds the error of the implemented Fourier block relative to the
finite-section target,
\begin{equation*}
 \norm{U_{\ve,h,K}^{{\rm F},{\rm QSVT}}
       -U_{\ve,h,K}^{{\rm F},{\rm fs}}}
 \le\eta_{\rm QSVT}+\eta_{\rm arith}\le\delta_{\rm pot}.
\end{equation*}
Choose \(\eta_{\rm QSVT}=\delta_{\rm pot}/2\) and
\(\eta_{\rm arith}\le\delta_{\rm pot}/2\).
The Fourier-data and finite-section errors are budgeted separately, so
the total potential-generator defect satisfies
\begin{equation*}
 \eta_{\rm pot}\le\eta_{\rm F}+\eta_{\rm fs}+\delta_{\rm pot}.
\end{equation*}
Taking \(\eta_{\rm F},\eta_{\rm fs}=O(\delta_{\rm pot})\) therefore
gives \(\eta_{\rm pot}=O(\delta_{\rm pot})\).
Thus a single outer LCU, in which the transport contribution is included
once, has normalization

\begin{align}
 \alpha_{\rm WH}(\ve;h,K,\mathcal Q,L_{\rm buf})
 =O_{r_x,D_{\rm P}}\!\Big(&dh^{-1}K^{1/2}
 +\sum_{m=0}^{\qP}\ve^{2m}K^{m+1/2}\Gamma_{m,d}^{\rm P}(\Omega_x)
 +\sqrt{\widetilde K}\,\mathcal V_{1,\mathcal Q}\Big).
 \label{eq:alpha-fixed}
\end{align}

For fixed \(h,K,\mathcal Q,L_{\rm buf}\) and the selected QSVT
polynomials, let \(\delta_{\rm sim}\) be the simulation tolerance
and \(\eta_{\rm QSP}\) the quantum signal processing (QSP) error. If a
finite-precision block encoding represents a Hermitian
\(\widetilde H_{\rm WH}\) with
physical operator defect
\(\norm{\widetilde H_{\rm WH}-H_{\rm WH}}\le\eta_{\rm BE}\), choose
\begin{equation*}
 \eta_{\rm QSP}=\frac12\delta_{\rm sim},
 \qquad T\eta_{\rm BE}\le\frac12\delta_{\rm sim}.
\end{equation*}
Duhamel's identity then bounds the total implemented-evolution error by
\(\delta_{\rm sim}\), and qubitization uses
\begin{equation}
 N_{\rm sim}=\wtO\!\left(
 T\alpha_{\rm WH}(\ve;h,K,\mathcal Q,L_{\rm buf})
 +\log\delta_{\rm sim}^{-1}\right)
 \label{eq:sim}
\end{equation}
full-Hamiltonian block queries \cite{sm:LowChuang2017,sm:LowChuang2019}.
When \(\alpha_{\rm pot}^{\rm F}>0\), one Fourier query invokes
\begin{equation}
 n_{\rm F}=O\!\left(1+\ve\beta_{\max}
 +\log\!\left[1+\frac{\alpha_{\rm pot}^{\rm F}}
 {\delta_{\rm pot}}\right]\right)
 \label{eq:nF}
\end{equation}
coordinate-block queries. Equations~\eqref{eq:alpha-fixed}--\eqref{eq:nF}
prove Lemma~\ref{lem:fixed-resources}, including the empty-Fourier case handled above.

\end{proof}

\section{Uniform parameter selection}
\label{sec:proof2}

\begin{lemma}[Uniform parameter selection]
\label{lem:uniform-selection}
Let the initial data and the potential satisfy the conditions for the
weighted-regularity propagation estimate specified below. Assume the
domain-restriction, periodization, sampling, spatial-stencil, and
input--readout conditions detailed below and in Sec.~\ref{sec:io}. Assume that the absolute zeroth
and first Fourier moments are finite. For the polynomial-in-\(d\)
conclusions, impose also the dimension-growth and structured-access
conditions specified in this section and Sec.~\ref{sec:proof3}; in
particular, the regularity constants and
\(\Gamma_{m,d}^{\rm P}(\Omega_{x,\delta})\) grow at most polynomially in
\(d\), and the required Fourier-tail accuracy is attained with
polynomial bandwidth at fixed \(T,\delta,D_{\rm P}\).

Fix \(T>0\) and \(p\in\mathbb N\), \(p\ge1\), with \(2p\ge r_x\),
as permitted by the regularity of the initial data and the potential. Let
\(\mathfrak T_{{\rm F},d}\) be the first absolute Fourier-moment tail
defined in Eq.~\eqref{eq:Fourier-tail-supp}. For each \(d\) and
\(0<\delta<1\), choose
\begin{equation}
 K_{d,\delta}
 =O\!\left(\operatorname{poly}(d)\,\delta^{-2/(2p-1)}\right).
 \label{eq:lemma-K-rate}
\end{equation}
Choose a paired Fourier quadrature with positive weights and
\(\norm{\xi_q}_\infty\le\Xi_{d,\delta}\), so that
\(\sqrt{K_{d,\delta}}\mathfrak T_{{\rm F},d}(\Xi_{d,\delta})
=O(\delta/[T\operatorname{poly}(d)])\), with quadrature error of
the same order in operator norm. Assume its discrete moments satisfy
\(\sum_q|a_q|+\mathcal V_{1,d,\delta}\le\operatorname{poly}(d)\),
uniformly in \(\delta,\ve\), where
\(\mathcal V_{1,d,\delta}=\sum_{q\in\mathcal Q_{d,\delta}}
|a_q|\norm{\xi_q}_1\).

The box \(\Omega_{x,\delta}\), mesh \(h_{d,\delta}\), and additional
Hermite modes can then be chosen independently of \(\ve\), with
\begin{align}
 L_{{\rm buf},d,\delta}=O\!\Big(&d^2\Xi_{d,\delta}^2
 +d\Xi_{d,\delta}\sqrt{K_{d,\delta}}+\log\!\left[1+\frac{T\operatorname{poly}(d)}{\delta}
 \bigl(1+d\Xi_{d,\delta}\sqrt{K_{d,\delta}}\bigr)\right]\Big).
 \label{eq:lemma-Lbuf-selection}
\end{align}
Set \(\widetilde K_{d,\delta}=K_{d,\delta}+L_{{\rm buf},d,\delta}\).
Selecting \(h_{d,\delta}^{-1}\ge\Xi_{d,\delta}\) and the remaining
implementation tolerances gives
\begin{equation}
 \sup_{0<\ve\le1}|A_{\rm WH}(T)-A^\ve(T)|=O(\delta).
 \label{eq:lemma-uniform-consistency}
\end{equation}
For fixed \(T,\delta\), the selected inverse mesh, bandwidth, and
Hermite sizes are polynomial in \(d\) under the stated dimension
assumptions. If \(V_{\rm F}=0\), take
\(\mathcal Q_{d,\delta}=\varnothing\) and
\(\Xi_{d,\delta}=\mathcal V_{1,d,\delta}=L_{{\rm buf},d,\delta}=0\).
\end{lemma}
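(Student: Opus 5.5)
We will prove the lemma by an error telescope for the observable, with one budget of size $O(\delta)$ for each layer and every parameter chosen from $\ve$-independent data. Write
\[
A^\ve(T)-A_{\rm WH}(T)=\bigl(A^\ve-A^\ve_\Omega\bigr)+\bigl(A^\ve_\Omega-A^\ve_{K}\bigr)+\bigl(A^\ve_K-A^\ve_{K,h}\bigr)+\bigl(A^\ve_{K,h}-A_{\rm WH}\bigr).
\]
The four pieces are, in order:
\begin{itemize}
\item $A^\ve_\Omega$, the observable for the problem restricted and periodized on $\Omega_{x,\delta}$;
\item $A^\ve_K$, the Hermite-Galerkin truncation (continuous in $x$, with exact projected potential);
\item $A^\ve_{K,h}=\langle\mathbf b_a,e^{-\ii TH^{\rm ex}_{\ve,h,K}}\mathbf R_0\rangle$;
\item $A_{\rm WH}$ as in Eq.~\eqref{eq:observables}.
\end{itemize}
The first difference is $O(\delta)$ by the assumed domain-control conditions, uniformly in $\ve$. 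The last is handled by Duhamel's formula. Since every generator is Hermitian, $\|e^{-\ii TH^{\rm ex}}-e^{-\ii TH_{\rm WH}}\|\le T\eta_{\rm pot}$. Hence $|A^\ve_{K,h}-A_{\rm WH}|\le s_as_RT(\eta_{\rm F}+\eta_{\rm fs}+\delta_{\rm pot})$ by the layering in Eq.~\eqref{eq:Hlayers} and the proof of Lemma~\ref{lem:fixed-resources}. It therefore suffices to make each $\eta$ of order $\delta/(T\operatorname{poly}(d))$.

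For the Hermite step, we use the $\ve$-uniform weighted regularity of Filbet--Golse. It bounds $\sup_{t\le T}\|(1+|y|^2-\Delta_y)^{p}R^\ve(t)\|$, together with $x$-derivatives, by $\operatorname{poly}(d)$ times the data norms. Standard Hermite approximation then gives a projection error $O(K^{-p})$ in the weighted norm. Next we compare the Galerkin evolution with the projected true solution. The truncation residual comes from the transport coupling $\sqrt{(K+1)/2}\,\partial_xR_{\mathbf k+\mathbf e_j}$ at the boundary layer, and it costs a factor $\sqrt K$. Energy stability (the skew-adjoint transport plus the Hermitian compressed potential) then gives an error $O(T\sqrt K\,K^{-p})=O(K^{-(p-1/2)})$. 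Choosing $K_{d,\delta}$ as in Eq.~\eqref{eq:lemma-K-rate} makes this $O(\delta)$. The spatial step is analogous. The consistency error of the order-$r_x$ stencil applied to $\partial_x\partial_yR$ is bounded by $h^{r_x}\|\partial_x^{r_x+1}R\|\cdot\sqrt K$. The condition $2p\ge r_x$ provides the needed regularity, and the stability of the unitary semidiscrete flow then yields an error $O(Th^{r_x}\sqrt K)$. Sampling and normalization errors of $\mathbf R_0,\mathbf b_a$ are covered by the assumptions of Sec.~\ref{sec:io}. We additionally impose $h^{-1}\ge\Xi_{d,\delta}$ so that the grid phases $\mathsf E_q$ are non-aliased.

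For the potential layers, $\eta_{\rm F}$ splits into the Fourier tail and the quadrature error. For a single mode, $\|\mathsf F^{\rm full}_{q,K}\|\le\|\xi_q\|_1\|\mathcal Y\Pi_K\|\lesssim\|\xi_q\|_1\sqrt K$, because $|z\operatorname{sinc}(\ve z/2)|\le|z|$. This is an $\ve$-free bound, so the omitted high frequencies contribute at most $C\sqrt K\,\mathfrak T_{{\rm F},d}(\Xi)$, which is small by hypothesis; the quadrature error is assumed to be of the same order. For $\eta_{\rm fs}$ we invoke Eq.~\eqref{eq:buffer} with $\|\xi_q\|_1\le d\Xi$ and $\ve\le1$, and with $\overline\eta_{\rm fs}$ determined by $A_0\le\operatorname{poly}(d)$. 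Replacing $\ve$ by its upper bound $1$ makes the condition $\ve$-independent, which gives exactly Eq.~\eqref{eq:lemma-Lbuf-selection}. Note that $\widetilde K$ appears inside the square root on the right-hand side; we resolve this with $2ab\le a^2+b^2$ as done there. The QSVT and arithmetic tolerances are then set to $\delta/(T\operatorname{poly}(d))$. The case $V_{\rm F}=0$ is trivial.

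We expect the main obstacle to be the Hermite truncation step. The regularity in Filbet--Golse must be propagated in a weighted $y$-norm with a loss of only $\sqrt K$, and uniformly in $\ve$, even though $U_\ve$ grows linearly in $y$. The plan is to use the commutator structure of $U_\ve$: it has bounded $y$-gradient $\int\nabla V(x+s\ve y)ds$ with $\ve$-uniform bounds. This lets a Gronwall argument close in the weighted norm. In addition, the mixed $x$-$y$ regularity must have polynomial-in-$d$ constants, which we take directly as the stated hypothesis.
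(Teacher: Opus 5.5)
Your overall architecture matches the paper's proof. You use the same telescope (box, Hermite--Galerkin, sampled semidiscrete, synthesized generator) and the Duhamel bound $s_as_RT\eta_{\rm pot}$ with $\eta_{\rm pot}\le\eta_{\rm F}+\eta_{\rm fs}+\delta_{\rm pot}$. You also use the same $\ve$-free tail estimate from $|z\operatorname{sinc}(\ve z/2)|\le|z|$, the buffer from Eq.~\eqref{eq:buffer} evaluated at $\ve=1$ with $\norm{\xi_q}_1\le d\Xi_{d,\delta}$, and $h^{-1}\ge\Xi_{d,\delta}$ against aliasing.

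The gap is in the Hermite--Galerkin step. For $e_K=R_K^\ve-\Pi_KR^\ve_{\rm box}$, skew-adjointness gives $\frac{\dd}{\dd t}\norm{e_K}_2\le\norm{\Pi_K\mathcal L_\ve Q_KR^\ve_{\rm box}}_2$. This residual contains the potential term $\Pi_KU_\ve Q_KR^\ve_{\rm box}$ as well as transport. Your claim that the residual comes only from the transport ladder coupling at the boundary layer is false. First, $U^{\rm F}_\ve(x,\cdot)$ is not a polynomial in $y$, so its Hermite matrix is dense and couples every omitted mode into $I_K$. Second, each monomial $y^\nu$ in Eq.~\eqref{eq:poly-U-supp} couples across $|\nu|$ levels with entries of size $K^{|\nu|/2}$. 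This is exactly where $\ve$-uniformity must be checked. The paper handles it in two pieces:
\begin{itemize}
\item For the Fourier part, the integral representation gives $|U^{\rm F}_\ve(x,y)|\le|y|\,\norm{\nabla V_{\rm F}}_{L^\infty}$, hence $\norm{\Pi_KU^{\rm F}_\ve Q_KF}_2\le CK^{-(p-1/2)}\norm{\Lambda_y^pF}_2$ uniformly in $\ve$.
\item For the polynomial part, Eq.~\eqref{eq:tensor-monomial-boundary} gives $\norm{\Pi_Ky^\nu Q_KF}_2\le C_rK^{-(p-1/2)}\norm{\Lambda_y^{p+(r-1)/2}F}_2$ for odd $r=|\nu|$. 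The prefactors $\ve^{2m}\le1$ are harmless.
\end{itemize}
The second estimate needs more weighted regularity than the $\Lambda_y^p$ control you assume. With only $\Lambda_y^p$, the boundary coupling of the $y^3$ term of $U^{\rm P}_\ve$ for a quartic $V_{\rm P}$ is bounded only by $O(K^{3/2}K^{-p})$. That term carries $\ve^2$ and is fully present at $\ve=1$. So your estimate does not close at the rate $K^{-(p-1/2)}$ behind Eq.~\eqref{eq:lemma-K-rate}. This is why the paper works in weighted norms of order $m_*=2p+\max\{2,D_{\rm P}-1\}$, which also absorb the $x$-degree of the coefficients $\partial^\nu V_{\rm P}$.

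Two other steps claim more than is available, although the lemma's hypotheses let you repair them by assumption.
\begin{itemize}
\item The paper invokes Filbet--Golse only for $d=1$ on $\R$. For $d>1$, and for the periodized box solution that your second arrow actually compares against, the $\ve$-uniform weighted bound and its $d$-dependence are hypotheses. Your planned commutator/Gronwall propagation is therefore unnecessary and is not the real obstacle.
\item $2p\ge r_x$ does not give the stencil-sampling regularity. The truncation error is evaluated on the Galerkin solution $R_K^\ve$, and pointwise sampling on a $d$-dimensional grid needs $H^{s_x(d)}_{\rm per}$ control with $s_x(d)=\lfloor d/2\rfloor+1$ on top of the $r_x+1$ derivatives. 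The paper imposes this separately through $\mathfrak G_{d,\delta}<\infty$ with polynomial growth in $d$.
\end{itemize}
Once that is assumed, your cruder $\sqrt K\,h^{r_x}$ bound still yields a polynomial-in-$d$ mesh. It uses $\norm{\Dmat_{y,K}}\le\sqrt{2K}$ where the paper keeps $\partial_{y_j}$ inside the norm.
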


\begin{proof}[Proof and detailed assumptions]
\noindent\emph{Error decomposition.---}
For the same observable, let \(A_{\rm box}^\ve(T)\),
\(A_K^\ve(T)\), and \(A_{\ve,h,K}^{\rm ex}(T)\) denote its
time-\(T\) values for the restricted--periodized,
Hermite--Galerkin, and exact semidiscrete evolutions,
respectively, and let \(\widetilde A(T)\) be the final quantum
estimate. The complete proof follows the chain
\begin{equation}
A^\ve(T)\longrightarrow A_{\rm box}^\ve(T)
 \longrightarrow A_K^\ve(T)
 \longrightarrow A_{\ve,h,K}^{\rm ex}(T)
 \longrightarrow A_{\rm WH}(T)\longrightarrow\widetilde A(T) .
 \label{eq:proof-map}
\end{equation}
All continuum pairings in the first two arrows use the
restricted observable. Moreover,
\(R_K^\ve(0)=\Pi_KR_{\rm box}^\ve(0)\) and
\(\mathbf R_0=\mathcal S_hR_K^\ve(0)\). The within-box difference
between the continuum pairing defining \(A_K^\ve(T)\) and the sampled
pairing defining \(A_{\ve,h,K}^{\rm ex}(T)\), including sampling,
quadrature, and normalization, is assigned to
\(\delta_{{\rm io},d,\delta}\) below.
The arrows represent, in order, restriction and periodization, Hermite
truncation, spatial discretization, potential synthesis, and Hamiltonian
simulation with readout.  This section controls the deterministic arrows
through \(A_{\rm WH}\); Section~\ref{sec:proof3} controls the last one.
For fixed \(d\), all analytic and data assumptions below are uniform in
\(0<\ve\le1\), but their constants need not be uniform in \(d\).
The polynomial-in-\(d\) statement additionally requires the listed
parameters and reversible access and preparation costs to be polynomial
in \(d\).

For \(\mathcal X=\mathbb R^d\) or an accuracy-selected periodic box
\(\Omega_{x,\delta}\), define
\begin{equation}
 \mathcal N_{m,\mathcal X}^{(d)}[F]
 :=\!\!\sum_{\substack{\alpha,\beta,\gamma,\zeta\in\mathbb N_0^d\\
 |\alpha|+|\beta|+|\gamma|+|\zeta|\le m}}
 \norm{x^\alpha\partial_x^\beta y^\gamma\partial_y^\zeta F}
 _{L^2(\mathcal X\times\mathbb R_y^d)},
 \qquad \mathcal N_m^{(d)}:=\mathcal N_{m,\mathbb R^d}^{(d)}.
 \label{eq:Nm}
\end{equation}
Let \(\mathcal L_\ve=\ii\nabla_x\!\cdot\!\nabla_y-\ii U_\ve\).
Fix \(p\in\mathbb N\) with \(p\ge1\) and \(2p\ge r_x\), and set
\(m_*=2p+\max\{2,D_{\rm P}-1\}\). For \(d=1\), the required
weighted regularity follows from Filbet and Golse
\cite[Proposition 4.1]{sm:FilbetGolse2026}. Let \(V\) satisfy the potential
conditions of that proposition at order \(m_*\), and assume that the
initial data satisfy
\begin{equation}
 C_{0,m_*}:=\sup_{0<\ve\le1}
 \mathcal N_{m_*}^{(1)}[R_0^\ve]<\infty.
 \label{eq:initial-weighted-regularity}
\end{equation}
The propagation estimate then gives
\begin{equation}
 \mathfrak N_{m_*,T}^{(1)}
 :=\sup_{\substack{0<\ve\le1\\0\le t\le T}}
 \mathcal N_{m_*}^{(1)}[R^\ve(t)]
 \le C_{0,m_*}e^{C_{m_*}[V]T},
 \label{eq:mstar}
\end{equation}
where \(C_{m_*}[V]\) is independent of \(\ve\).
The same bound is assumed for the periodized solutions on the selected
boxes; its supremum over \(0<\delta<1\), \(0<\ve\le1\), and
\(0\le t\le T\) is denoted by
\(\mathfrak N_{m_*,T,{\rm box}}^{(d)}\).
For \(d>1\), both the bound and its dependence on \(d\) remain explicit
hypotheses. Here \(R_{\rm box}^\ve\) solves the selected periodic-box
equation with restricted--periodized initial data. On one fundamental box
we first form \(U_\ve^{\rm P}+U_\ve^{\rm F}\) and then
periodically extend the resulting real multiplication
coefficient; we do not smoothly periodize \(V_{\rm P}\). Thus
the polynomial expansion in Eq.~\eqref{eq:poly-U-supp} and the Fourier split remain exact on the box.
We assume that this periodic-box realization is well posed on a
common invariant domain, that its generator \(\mathcal L_\ve\) is
skew-adjoint, and that its solutions satisfy the periodic Sobolev
compatibility required by the sampling and stencil estimates below.
We assume the standard observable-level domain estimate
\(\sup_{0<\ve\le1}|A_{\rm box}^\ve(T)-A^\ve(T)|
\le\delta_{{\rm dom},d,\delta}\le c_{\rm dom}\delta\), including the induced
initial-data and readout restriction errors; no dimension-uniform
localization theorem is invoked.

We record the only nonstandard point in the Hermite estimate. With
\(\Lambda_y=-\Delta_y+|y|^2\) and \(Q_K=I-\Pi_K\), Parseval gives
\begin{equation}
 \norm{\Lambda_y^rQ_KF}_2
 \le(2K+d)^{-(s-r)}\norm{\Lambda_y^sF}_2,
 \qquad 0\le r\le s,\quad r,s\in\mathbb R.
 \label{eq:tail}
\end{equation}
The oscillator graph norm and the ladder identities imply, for
\(\mathcal Z_j\in\{y_j,\partial_{y_j}\}\),
\begin{equation}
 \norm{\Pi_K\mathcal Z_jQ_KF}_2
 \le C_pK^{-(p-1/2)}\norm{\Lambda_y^pF}_2.
 \label{eq:tensor-boundary}
\end{equation}
More generally, if \(r=|\nu|\ge1\) is odd, then
\begin{equation}
 \norm{\Pi_Ky^\nu Q_KF}_2
 \le C_rK^{-(p-1/2)}
 \norm{\Lambda_y^{p+(r-1)/2}F}_2.
 \label{eq:tensor-monomial-boundary}
\end{equation}
Indeed, \(\norm{y^\nu G}_2\le
C_r\norm{\Lambda_y^{r/2}G}_2\), and Eq.~\eqref{eq:tail} applied to
\(G=Q_KF\) supplies the remaining factor.  The constants are independent
of \(d\) for fixed \(r\); all multi-index multiplicities are retained
below.
Define the Hermite--Galerkin approximation of this box solution by
\begin{equation*}
 \partial_tR_K^\ve=\Pi_K\mathcal L_\ve\Pi_KR_K^\ve,
 \qquad R_K^\ve(0)=\Pi_KR_{\rm box}^\ve(0).
\end{equation*}
For the error
\(e_K=R_K^\ve-\Pi_KR_{\rm box}^\ve\), skew-adjointness gives
\begin{equation*}
 \frac{\dd}{\dd t}\norm{e_K}_2
 \le\norm{\Pi_K\mathcal L_\ve Q_KR_{\rm box}^\ve}_2.
\end{equation*}
Write \(V_{\rm P}(x)=\sum_{|\alpha|\le D_{\rm P}}v_\alpha x^\alpha\),
\(\mathcal C_{{\rm P},d}:=\sum_\alpha|v_\alpha|\), and
\(\mathcal C_{{\rm F},d}^{(1)}
:=\norm{\nabla V_{\rm F}}_{L^\infty(\mathbb R^d;\ell^2)}\).
For any \(F\) in the displayed graph norms, let
\(\mathcal R_{\rm tr}\), \(\mathcal R_{\rm P}\), and
\(\mathcal R_{\rm F}\) denote the \(L^2\) norms of the contributions to
\(\Pi_K\mathcal L_\ve Q_KF\) from transport, the polynomial potential,
and the smooth non-polynomial potential, respectively.
The polynomial expansion in Eq.~\eqref{eq:poly-U-supp}, together with
Eqs.~\eqref{eq:tensor-boundary} and \eqref{eq:tensor-monomial-boundary}, gives
\begin{align*}
 \mathcal R_{\rm tr}
 \le C_pK^{-(p-1/2)}
 \sum_{j=1}^d\norm{\Lambda_y^p\partial_{x_j}F}_2,\qquad
 \mathcal R_{\rm P}
 \le C_{p,D_{\rm P}}\mathcal C_{{\rm P},d}
 K^{-(p-1/2)}\mathcal N_{m_*,\Omega_{x,\delta}}^{(d)}[F].
\end{align*}
An odd monomial of degree \(r\) requires weighted
\(y\)-regularity of order \(2p+r-1\) and a coefficient of degree at most
\(D_{\rm P}-r\);
the total is at most \(m_*\). Moreover,
\(\sum_{|\nu|=r,\,\nu\le\alpha}\binom{\alpha}{\nu}
=\binom{|\alpha|}{r}\), so no additional \(d\)-dependent combinatorial
constant is hidden here. For the smooth non-polynomial component, the full integral
representation, without Taylor expansion, gives
\(|U_\ve^{\rm F}(x,y)|\le
|y|\mathcal C_{{\rm F},d}^{(1)}\). Hence
\begin{equation*}
 \mathcal R_{\rm F}
 :=\norm{\Pi_KU_\ve^{\rm F}Q_KF}_2
 \le \mathcal C_{{\rm F},d}^{(1)}K^{-(p-1/2)}
 \norm{\Lambda_y^pF}_2,
\end{equation*}
uniformly in \(\ve\). Consequently,
\begin{equation*}
 \norm{\Pi_K\mathcal L_\ve Q_KR_{\rm box}^\ve}_2
 \le C_{p,D_{\rm P}}
 \bigl(1+\mathcal C_{{\rm P},d}
 +\mathcal C_{{\rm F},d}^{(1)}\bigr)K^{-(p-1/2)}
 \mathcal N_{m_*,\Omega_{x,\delta}}^{(d)}
 [R_{\rm box}^\ve],
\end{equation*}
where, for fixed \(p,D_{\rm P}\), the constant \(C_{p,D_{\rm P}}\)
is independent of \(d,\ve,K,\delta\). Time integration, followed by the tail
estimate \eqref{eq:tail} for \(Q_KR_{\rm box}^\ve\), gives
\begin{equation}
 \sup_{\substack{0<\ve\le1\\0\le t\le T}}
 \norm{R_{\rm box}^\ve(t)-R_K^\ve(t)}_2
 \le C_{{\rm HG},d}K^{-(p-1/2)},
 \label{eq:HG}
\end{equation}
where
\begin{align*}
 C_{{\rm HG},d}&=C_{p,D_{\rm P}}
 \Bigl[1+T\bigl(1+\mathcal C_{{\rm P},d}
 +\mathcal C_{{\rm F},d}^{(1)}\bigr)\Bigr]
 \mathfrak N_{m_*,T,{\rm box}}^{(d)}
\end{align*}
is independent of \(\ve,K,\delta\). Its dependence on \(d\) is
displayed through the potential data and the regularity norm.

For the spatial grid, take \(s_x(d)=\lfloor d/2\rfloor+1\), set
\[
 H_x^{s_x(d)}L_y^2:=
 H_{\rm per}^{s_x(d)}(\Omega_{x,\delta};L^2(\mathbb R_y^d)),
\]
and let
\(C_{{\rm st},d,\delta}\) be the associated vector-valued sampling and stencil
constant. At the cutoff \(K_{d,\delta}\) selected in
Eq.~\eqref{eq:Kselect} below, define
\begin{equation*}
 \mathfrak G_{d,\delta}:=
 \sup_{\substack{0<\ve\le1\\0\le t\le T}}
 \sum_{j=1}^d
 \norm{\partial_{x_j}^{r_x+1}\partial_{y_j}R_{K_{d,\delta}}^\ve(t)}
 _{H_x^{s_x(d)}L_y^2}.
\end{equation*}
We assume \(\mathfrak G_{d,\delta}<\infty\) for each fixed
\(d,\delta\), uniformly for \(0<\ve\le1\). This is an additional
spatial-regularity assumption: Eq.~\eqref{eq:mstar} alone does not imply
it when \(s_x(d)\) increases with \(d\). For the polynomial-in-\(d\)
conclusion, we additionally assume that
\(C_{{\rm st},d,\delta}\mathfrak G_{d,\delta}\) grows at most
polynomially in \(d\) at fixed \(\delta\).
Let \(\mathbf R_{\ve,h,K}^{\rm ex}\) be the exact sampled
semidiscrete evolution initialized by \(\mathcal S_hR_K^\ve(0)\), and
set
\(e_h=\mathcal S_hR_{K_{d,\delta}}^\ve
-\mathbf R_{\ve,h,K_{d,\delta}}^{\rm ex}\).
Let
\(\mathcal L_{\ve,h,K}^{\rm ex}:=-\ii
H_{\ve,h,K}^{\rm ex}\). Then \(e_h(0)=0\) and
\begin{equation*}
 \partial_te_h=\mathcal L_{\ve,h,K}^{\rm ex}e_h+\tau_h,
 \qquad
 \tau_h:=\partial_t(\mathcal S_hR_K^\ve)
 -\mathcal L_{\ve,h,K}^{\rm ex}\mathcal S_hR_K^\ve,
\end{equation*}
where the order-\(r_x\) stencil consistency and the vector-valued sampling
bound give
\(\norm{\tau_h}_2\le
C_{{\rm st},d,\delta}\mathfrak G_{d,\delta}h^{r_x}\). Since
\(\mathcal L_{\ve,h,K}^{\rm ex}\) is skew-Hermitian, the energy
estimate and time integration give
\begin{equation}
 \sup_{t\le T}\norm{e_h(t)}_2
 \le C_{{\rm st},d,\delta}T\mathfrak G_{d,\delta}h^{r_x}.
 \label{eq:xerr}
\end{equation}

Under the assumption that the absolute zeroth and
first Fourier moments are finite, define the weighted first-moment tail
\begin{equation}
 \mathfrak T_{{\rm F},d}(\Xi)=
 \begin{cases}
 \displaystyle(2\pi)^{-d}
 \int_{\norm{\xi}_\infty>\Xi}\norm{\xi}_1
 |\widehat V_{\rm F}(\xi)|\dd\xi,&\text{Fourier integral},\\[2mm]
 \displaystyle\sum_{\norm{\xi_q}_\infty>\Xi}
 |a_q|\norm{\xi_q}_1,&\text{periodic series}.
 \end{cases}
 \label{eq:Fourier-tail-supp}
\end{equation}
With \(\mathcal A_\xi=\xi\cdot y\), since
\(|z\operatorname{sinc}(\ve z/2)|\le|z|\) and
\(\norm{y_j\Pi_K}\le\sqrt{2K}\),
\begin{equation*}
 \norm{\mathcal A_\xi\operatorname{sinc}
 (\ve\mathcal A_\xi/2)\Pi_K}
 \le\sqrt{2K}\norm{\xi}_1,\qquad
 \eta_{\rm F}\le\sqrt{2K}\,\mathfrak T_{{\rm F},d}(\Xi)
 +\eta_{{\rm F},\rm quad}.
\end{equation*}
Here \(\eta_{{\rm F},\rm quad}\) is a uniform upper bound,
over \(0<\ve\le1\) and all grid points, on the retained-sector
operator-norm defect of the in-band positive-weight Fourier quadrature. For
a periodic series it includes any modes omitted inside the chosen band.

Under the \(O(1)\) input--readout assumption stated in
Sec.~\ref{sec:finite}, the following bounds can be chosen uniformly in \(d\); we retain
their possible \(d\)-dependence to make the resource accounting explicit.
Assume known bounds independent of \(\ve\) and of the
selected discretization,
\(s_R\le S_{R,d}\) and \(s_a\le S_{a,d}\) along the selected family, and
set
\begin{equation*}
 \mathcal S_d:=\max\{1,S_{R,d}S_{a,d},S_{a,d},\norm{\check a}_2\}
\end{equation*}
and fix \(0<c_K,c_{\rm F}<1\). We assume that the
positive-weight Fourier quadrature (or periodic-series truncation) can be
selected to satisfy the tail and in-band quadrature bounds below, with
stable absolute weights as in Eq.~\eqref{eq:stable}. Finiteness of the
Fourier moments guarantees only the existence of a tail cutoff for each
fixed \(d\); the in-band quadrature property remains an assumption, and
neither statement by itself implies polynomial dependence on \(d\). For the
polynomial-in-\(d\) resource conclusion, we therefore additionally assume
that, at each fixed \(\delta\), the selected bandwidth \(\Xi_{d,\delta}\),
mode count, quadrature and reversible-access costs, and input--output mesh
threshold grow at most polynomially in \(d\). Select
\begin{align}
 K_{d,\delta}
 &:=\min\{K\ge1:C_{{\rm HG},d}\norm{\check a}_2
 K^{-(p-1/2)}\le c_K\delta\},
 \label{eq:Kselect}\\
 \sqrt{2K_{d,\delta}}\,
 \mathfrak T_{{\rm F},d}(\Xi_{d,\delta})
 &\le c_{\rm F}\frac{\delta}{T\mathcal S_d},\qquad
 \eta_{{\rm F},\rm quad}=O\!\left(\frac{\delta}{T\mathcal S_d}\right).
 \nonumber
\end{align}
In particular,
\begin{equation}
 K_{d,\delta}=O\!\left[
 \left(1+\frac{C_{{\rm HG},d}\norm{\check a}_2}{\delta}
 \right)^{1/(p-1/2)}\right].
 \label{eq:Krate}
\end{equation}
For the selected quadrature, define
\begin{align}
N_{\mathcal Q,d,\delta}&:=|\mathcal Q_{d,\delta}|,
 &A_{0,d,\delta}&:=\sum_q|a_q|,
 \nonumber\\[-1mm]
 \mathcal V_{1,d,\delta}&:=\sum_q|a_q|\norm{\xi_q}_1,
 &A_{0,d,\delta}+\mathcal V_{1,d,\delta}&\le C_{V,d}.
 \nonumber\\[-1mm]
 B_{d,\delta}&:=\max_q\norm{\xi_q}_1,
 &B_{d,\delta}&\le d\Xi_{d,\delta}.
 \label{eq:stable}
\end{align}
The stability constant \(C_{V,d}\) is independent of
\(\delta\) and \(\ve\).
Using Eq.~\eqref{eq:buffer} at \(\ve=1\), with aggregate physical
operator tolerance \(O(\delta/(T\mathcal S_d))\), gives
\begin{align}
 L_{{\rm buf},d,\delta}=O\!\Big(&B_{d,\delta}^2
 +B_{d,\delta}\sqrt{K_{d,\delta}}
 +\log\!\left[1+
 \frac{T\mathcal S_d\max\{1,C_{V,d}\}
 (1+B_{d,\delta}\sqrt{K_{d,\delta}})}{\delta}\right]\Big).
 \label{eq:Lselect}
\end{align}
Thus the first two buffer terms are, in the worst case,
\(O(d^2\Xi_{d,\delta}^2+d\Xi_{d,\delta}\sqrt{K_{d,\delta}})\).
Set \(\widetilde K_{d,\delta}=K_{d,\delta}+L_{{\rm buf},d,\delta}\).
Substituting Eqs.~\eqref{eq:Atilde} and \eqref{eq:aF} into
Eq.~\eqref{eq:nF} with \(\delta_{\rm pot}=\Theta(\delta/(T\mathcal S_d))\)
gives
\begin{equation}
 n_{{\rm F},d,\delta}
 =O\!\left(
1+\sqrt{\widetilde K_{d,\delta}}B_{d,\delta}
 +\log\!\left[1+
 \frac{T\mathcal S_d\sqrt{\widetilde K_{d,\delta}}
 \mathcal V_{1,d,\delta}}{\delta}\right]\right).
 \label{eq:nF-selected}
\end{equation}
This is a uniform upper bound on the maximal degree for
\(0<\ve\le1\); the instance-dependent degree replaces
\(\sqrt{\widetilde K_{d,\delta}}B_{d,\delta}\) by
\(\ve\sqrt{\widetilde K_{d,\delta}}B_{d,\delta}\).

Let \(\kappa_{{\rm io},d,\delta}\) be the inverse-mesh threshold for input,
normalization, and observable quadrature. We assume their aggregate
observable-level error, including normalization factors, satisfies
\(\delta_{{\rm io},d,\delta}\le c_{\rm io}\delta\) whenever
\(h^{-1}\ge\kappa_{{\rm io},d,\delta}\).
Section~\ref{sec:io} illustrates this assumption for the factorized
Gaussian input and Gaussian readout. For more
general structured inputs and readouts, the existence and stated scaling
of this threshold are explicit assumptions. The
mesh choice
\begin{equation}
 h_{d,\delta}^{-1}\asymp\max\!\left\{1,
 \left(\frac{TS_{a,d}C_{{\rm st},d,\delta}\mathfrak G_{d,\delta}}{\delta}
 \right)^{1/r_x},\Xi_{d,\delta},\kappa_{{\rm io},d,\delta}\right\}
 \label{eq:hselect}
\end{equation}
controls the stencil error and prevents Fourier aliasing. Choose the QSVT
and finite-precision arithmetic tolerances so that
\begin{equation*}
 \eta_{\rm pot}=O\!\left(\frac{\delta}{T\mathcal S_d}\right).
\end{equation*}
The finite-section, QSVT, Fourier quadrature, and arithmetic defects enter
this single operator budget. Applying the triangle inequality along the
deterministic part of Eq.~\eqref{eq:proof-map}, followed by Duhamel's
identity, Eqs.~\eqref{eq:xerr} and \eqref{eq:HG}, and the
domain and input--output assumptions above give the deterministic
error bound
\begin{align}
 \sup_{0<\ve\le1}|A_{\rm WH}(T)-A^\ve(T)|
 \le{}&s_as_RT\eta_{\rm pot}
 +s_aC_{{\rm st},d,\delta}T\mathfrak G_{d,\delta}h_{d,\delta}^{r_x}
 \nonumber\\*[-1mm]
 &+\norm{\check a}_2C_{{\rm HG},d}K_{d,\delta}^{-(p-1/2)}
 +\delta_{{\rm io},d,\delta}+\delta_{{\rm dom},d,\delta}
 =O(\delta).
 \label{eq:ledger}
\end{align}
Here the last two terms contain the standard sampling, quadrature,
restriction, and periodization errors allocated above. When
\(V_{\rm F}=0\), take
\(N_{\mathcal Q,d,\delta}=B_{d,\delta}=\Xi_{d,\delta}=L_{{\rm buf},d,\delta}
=C_{V,d}=0\). This proves Lemma~\ref{lem:uniform-selection}.

\end{proof}

\section{Proof of the main theorem: uniform observable complexity}
\label{sec:proof3}

This section proves the main theorem of the Letter by combining
Lemma~\ref{lem:fixed-resources}, Lemma~\ref{lem:uniform-selection}, and
overlap estimation. The query, gate, and working-qubit bounds are given
explicitly below.

Together with Eq.~\eqref{eq:proof-map}, the final decomposition
\( |\widetilde A(T)-A^\ve(T)|
\le |\widetilde A(T)-A_{\rm WH}(T)|
+|A_{\rm WH}(T)-A^\ve(T)| \)
shows that Eq.~\eqref{eq:ledger} leaves only the normalized-state
preparation, Hamiltonian-simulation, and overlap-estimation errors to
allocate.

Set
\begin{equation*}
 \Lambda_{d,\delta}:=\max\{1,s_Rs_a/\delta\}.
\end{equation*}
If \(s_Rs_a=0\), return \(\widetilde A(T)=0\); the deterministic
bound in Eq.~\eqref{eq:ledger} still controls its error from the
continuum target. Otherwise, the preparation oracles of the main text are
\begin{equation*}
 O_R\ket0=\frac{\mathbf R_0}{s_R},\qquad
 O_a\ket0=\frac{\mathbf b_a}{s_a}.
\end{equation*}
Estimate the normalized overlap
\begin{equation*}
 \mu:=\left\langle\frac{\mathbf b_a}{s_a},
 e^{-\ii TH_{\rm WH}}\frac{\mathbf R_0}{s_R}\right\rangle,
 \qquad \widetilde A(T):=s_as_R\widetilde\mu,
\end{equation*}
where \(\widetilde\mu\) is the overlap estimate. Allocate the normalized
input-state, observable-state, Hamiltonian-simulation, and
overlap-estimation errors so that their sum is
\(O(\Lambda_{d,\delta}^{-1})\). After rescaling the
normalized overlap, their contribution is at most
\(O(s_Rs_a/\Lambda_{d,\delta})=O(\delta)\); combining this with
Eq.~\eqref{eq:ledger} proves the end-to-end error statement.

Evaluate the normalization in Eq.~\eqref{eq:alpha-fixed} at the parameters
chosen in Section~\ref{sec:proof2}:
\begin{equation*}
 \alpha_{{\rm WH},d}(\ve;\delta)
 :=\alpha_{\rm WH}(\ve;h_{d,\delta},K_{d,\delta},
 \mathcal Q_{d,\delta},L_{{\rm buf},d,\delta}).
\end{equation*}
The derivative estimate \eqref{eq:Gamma-d-bound} applies on
\(\Omega_{x,\delta}\).
Amplitude estimation and qubitized Hamiltonian simulation therefore use
\begin{align}
 N_H&=\wtO\!\left[
 \Lambda_{d,\delta}\left\{T\alpha_{{\rm WH},d}(\ve;\delta)
 +\log(1+\Lambda_{d,\delta})\right\}\right],
 \nonumber\\
 N_{\rm ladder}&=O\!\left(N_Hn_{{\rm F},d,\delta}\right),
 \qquad N_{\rm prep}=O(\Lambda_{d,\delta}),
 \label{eq:NH}
\end{align}
with constant success probability \cite{sm:Brassard2002}.

To turn these oracle queries into an end-to-end gate bound, let
\(\mathcal C_{{\rm out},d,\delta}\) be the cost of the outer
transport/polynomial/Fourier PREPARE--SELECT operations and let
\(\mathcal C_{{\rm mode},d,\delta}\) be the cost of one coordinate block
and its mode-dependent arithmetic. If \(\mathcal C_R,\mathcal C_a\) are
the two state-preparation costs, then
\begin{equation}
 G_{\rm tot}=\wtO\!\left(
 N_H[\mathcal C_{{\rm out},d,\delta}
 +n_{{\rm F},d,\delta}\mathcal C_{{\rm mode},d,\delta}]
 +\Lambda_{d,\delta}(\mathcal C_R+\mathcal C_a)\right).
 \label{eq:gates}
\end{equation}
Let \(\ell_{d,\delta}\) be the per-coordinate side length of the selected
periodic box and define the corresponding grid size by
\(M_{d,\delta}:=\lceil\ell_{d,\delta}/h_{d,\delta}\rceil\), with the usual
harmless adjustment so that the periodic mesh closes.
For the number
\(\mathsf P_{D_{\rm P},d}=\sum_{m=0}^{\qP}
\binom{2d+2m}{2m+1}\)
of polynomial shifted-diagonal LCU labels (using
\(\sum_{|\nu|=r}\prod_j(\nu_j+1)=\binom{2d+r-1}{r}\)),
the working-register size is
\begin{align}
 n_{\rm work}=O\!\Big(&d\log M_{d,\delta}
 +d\log\widetilde K_{d,\delta}+\log(1+N_{\mathcal Q,d,\delta})
 +\log(1+\mathsf P_{D_{\rm P},d})
 \nonumber\\[-1mm]
 &+\log(1+n_{{\rm F},d,\delta})+\log(1+\Lambda_{d,\delta})
 +b_{d,\delta}+w_{{\rm access},d,\delta}\Big),
 \label{eq:working-qubits}
\end{align}
where \(b_{d,\delta}\) is sufficient for potential-generator precision
\(O(\delta/(T\mathcal S_d))\) and full-Hamiltonian block-encoding precision
\(\eta_{\rm BE}\) with
\(T\eta_{\rm BE}=O(\Lambda_{d,\delta}^{-1})\). The costs
\(\mathcal C_R,\mathcal C_a\) target state-preparation error
\(O(\Lambda_{d,\delta}^{-1})\), and
\(w_{{\rm access},d,\delta}\) is the largest reversible workspace of all
data-access, PREPARE--SELECT, QSVT-phase, and state-preparation oracles.
Externally stored oracle data are not included.

For fixed \(T,\delta,p,r_x,D_{\rm P}\), these formulas are polynomial in
\(d\) if the derived constants and parameters themselves---in particular
\(C_{{\rm HG},d}\), \(C_{{\rm st},d,\delta}\),
\(\mathfrak G_{d,\delta}\), \(\kappa_{{\rm io},d,\delta}\),
\(S_{R,d}\), \(S_{a,d}\), \(\Xi_{d,\delta}\), \(C_{V,d}\),
\(N_{\mathcal Q,d,\delta}\), \(M_{d,\delta}\), the displayed
polynomial-derivative bounds, and
all PREPARE, SELECT, phase, arithmetic, workspace, and state-preparation
costs---grow at most polynomially in \(d\). This excludes
a hidden \(O(N_{\mathcal Q,d,\delta})\) table-loading cost unless
\(N_{\mathcal Q,d,\delta}\) is
itself polynomial. Without these assumptions, Eq.~\eqref{eq:NH} is an
oracle-query bound only. Finally, the instance-dependent Fourier degree
and the normalization bound \eqref{eq:alpha-fixed} contain only
nonnegative powers of
\(\ve\), while Eq.~\eqref{eq:nF-selected} and every selected
parameter are uniform for
\(0<\ve\le1\). This proves the main theorem of the Letter.

\section{Inputs and physical observables}
\label{sec:io}

\subsection{A positive uniformly regular input family}

Fix \(\sigma>1/2\), \(x_0,p_0\in\R^d\), and define
\begin{align}
 \psi_{x_0,p}^\ve(X)
 &=(2\pi\sigma^2)^{-d/4}
 e^{-|X-x_0|^2/(4\sigma^2)}e^{\ii p\cdot X/\ve},
 \nonumber\\
 \tau_\ve^2&=1-\frac{\ve^2}{4\sigma^2}>0,\qquad
 G_\ve(p)=(2\pi\tau_\ve^2)^{-d/2}
 e^{-|p-p_0|^2/(2\tau_\ve^2)},
 \nonumber\\
 \widehat\rho_0^\ve&=\int_{\R^d}G_\ve(p)
 \ket{\psi_{x_0,p}^\ve}\!\bra{\psi_{x_0,p}^\ve}\,\dd p.
 \label{eq:mixture}
\end{align}
The choice \(\sigma>1/2\) ensures, uniformly for \(0<\ve\le1\),
\begin{equation*}
 0<c_\sigma:=1-\frac1{4\sigma^2}\le\tau_\ve^2\le1,
 \qquad
 \norm{\psi_{x_0,p}^\ve}_2=1,\qquad
 \int_{\R^d}G_\ve(p)\,\dd p=1.
\end{equation*}
Indeed,
\begin{equation*}
 \langle f,\widehat\rho_0^\ve f\rangle
 =\int G_\ve(p)|\langle\psi_{x_0,p}^\ve,f\rangle|^2\,\dd p\ge0,
 \qquad
 \operatorname{Tr}\widehat\rho_0^\ve=\int G_\ve(p)\,\dd p=1.
\end{equation*}
Substituting \(X=x+\ve y/2\) and \(Y=x-\ve y/2\) into the
kernel of Eq.~\eqref{eq:mixture} and evaluating the Gaussian momentum
integral gives
\begin{equation*}
 R_0^\ve(x,y)=(2\pi\sigma^2)^{-d/2}
 e^{-|x-x_0|^2/(2\sigma^2)-|y|^2/2+\ii p_0\cdot y}.
\end{equation*}
Thus \(R_0^\ve\) is independent of \(\ve\) and
\begin{equation}
 \norm{R_0^\ve}_2=(2\sigma)^{-d/2},\qquad
 \mathcal N_m^{(d)}[R_0^\ve]<\infty\quad(m\ge0).
 \label{eq:R0norm}
\end{equation}
Since \(X-Y=\ve y\), its \(O(1)\) \(y\)-width is an \(O(\ve)\) coherence
length in the original density kernel.

For \(p_0=0\),
\(R_0^\ve\propto e^{-|x-x_0|^2/(2\sigma^2)}\Phi_{\mathbf0}(y)\). Hence
\begin{equation*}
 s_R^2=h^d\sum_{\mathbf i}|R_{\mathbf0}(x_{\mathbf i})|^2
 =\norm{R_0^\ve}_2^2+O(\delta)=O(1).
\end{equation*}

The amplitudes factor over coordinates. Grover--Rudolph conditional
rotations, with reversibly computed Gaussian interval masses, prepare the
normalized state with cost
\begin{equation*}
 \mathcal C_R=O\!\left(d\,\poly(\log M,\log K,\log(d/\delta))\right),
\end{equation*}
uniformly in \(\ve\) \cite{sm:GroverRudolph2002}. The mesh threshold
\(\kappa_{{\rm io},d,\delta}\) absorbs the difference between cell-mass and
point-sampled states.

A uniform \(L^2\)-norm bound is not automatic:
Eq.~\eqref{eq:change} gives
\begin{equation*}
 \norm{R^\ve}_2^2=\ve^{-d}\operatorname{Tr}[(\widehat\rho^\ve)^2].
\end{equation*}
A trace-one pure-state family extending to \(\ve\to0\) therefore has
\(\norm{R^\ve}_2=\ve^{-d/2}\), whereas Eq.~\eqref{eq:mixture} has purity
\((\ve/(2\sigma))^d\) and the uniform norm in Eq.~\eqref{eq:R0norm}.
This family verifies the uniform initial \(L^2\)-norm bound,
initial regularity, and structured-preparation assumptions. In one
dimension, for potentials satisfying the conditions of
Ref.~\cite[Proposition 4.1]{sm:FilbetGolse2026}, the full-space propagated
bound follows from Eq.~\eqref{eq:mstar}. The periodic-box bounds and the
higher-dimensional regularity and dimension-growth conditions are the
separate hypotheses stated in Sec.~\ref{sec:proof2}.

\subsection{Physical observables and readout}

We use the observable vector \(\mathbf b_a\) and pairing defined in
Eq.~\eqref{eq:observables}, with preparation and overlap estimation
as in Sec.~\ref{sec:proof3}.
For \(\sigma_a>0\), choose
\(c_a=(\pi\sigma_a^2)^{-d/4}\) and let
\(\check a(x,y)=c_ae^{-|x-x_a|^2/(2\sigma_a^2})
\Phi_{\mathbf0}(y)\). Then
\begin{equation*}
 s_a^2=h^d\sum_{\mathbf i}|c_ae^{-|x_{\mathbf i}-x_a|^2/(2\sigma_a^2)}|^2
 =1+O(\delta)=O(1),
\end{equation*}
and the same factorized preparation gives
\(\mathcal C_a=O(d\,\poly(\log M,\log K,\log(d/\delta)))\).
Local mass, momentum, and kinetic-energy densities obtained from
\(R^\ve(x,0)\) and its \(y\)-derivatives require additional trace
regularity. The main theorem of the Letter concerns fixed
\(L^2_{x,y}\) readout windows satisfying the stated sampling bounds.

\section{Numerical experiments}
\label{sec:numerics}

We present three one-dimensional experiments: Weyl profiles for polynomial
and smooth non-polynomial potentials, physical densities at small and order-one
\(\ve\) on the same grid, and rapidly oscillating densities produced by
a potential-only terminal propagation on the computed Hermite coefficients. The computations target the
full finite-\(\ve\) potential difference in Eq.~\eqref{eq:weyl-supp}, without taking its semiclassical limit.
For the Morse experiment, the smooth non-polynomial potential construction
uses a periodic extension, paired Fourier modes, and polynomial matrix
functions as specified below. The terminal pulse likewise uses paired
Fourier modes and a bounded polynomial matrix function, followed by
propagation of the finite Hermite state.
The local mass, momentum, and kinetic-energy densities are
\begin{equation}
 n^\ve(t,x)=R^\ve(t,x,0),\qquad
 j^\ve(t,x)=-\ii\partial_yR^\ve(t,x,0),\qquad
 E^\ve(t,x)=-\frac12\partial_y^2R^\ve(t,x,0).
 \label{eq:num-moments}
\end{equation}
Here \(j^\ve\) is momentum density, and \(E^\ve\) is the Weyl
kinetic-energy density, which need not be positive pointwise.
These local moments are classical diagnostics of the discretization;
the quantum observable theorem concerns the bounded readout windows
specified in Sec.~\ref{sec:io}.

Time integration for the autonomous stages uses the fourth-order symmetric composition
\begin{equation}
 S_4(\Delta t)=S_2(a\Delta t)S_2(b\Delta t)S_2(a\Delta t),\qquad
 S_2(\tau)=e^{-\ii\tau U/2}e^{-\ii\tau H_{\rm tr}}e^{-\ii\tau U/2},
 \quad a=\frac{1}{2-2^{1/3}},\quad b=1-2a.
 \label{eq:num-time}
\end{equation}
The Weyl--Hermite computations use scaled basis functions
\(\Phi_k^{(\ell)}(y)=\ell^{-1/2}\Phi_k(y/\ell)\), a centered spatial
derivative of order 28, and periodic numerical boundaries. FFT
diagonalization of the spatial difference matrix retains its
finite-difference symbol. The matrix subflows, including the polynomial
matrix functions in the Morse and terminal-pulse experiments, are evaluated classically; QSVT circuits are not
simulated in these experiments.
Independent reference solutions for the autonomous stages use Fourier discretization in both
\(x\) and \(y\), pointwise evaluation of the exact potential difference,
and Eq.~\eqref{eq:num-time}. The fine reference grids are specified below.
In the reference--coarse comparisons below, reference curves are green
and independently computed coarse values are blue open circles; every
coarse node in the displayed interval is shown.

\subsection{Weyl profiles for polynomial and smooth non-polynomial potentials}

We first consider
\begin{equation}
 V_{\rm P}(x)=\frac{x^2}{2}+\frac{x^4}{40},\qquad
 V_{\rm F}(x)=0.6(1-\cos x).
 \label{eq:num-profile-potentials}
\end{equation}
Their exact potential differences are
\begin{equation}
 U_\ve^{\rm P}(x,y)=(x+0.1x^3)y+0.025\ve^2xy^3,
 \qquad
 U_\ve^{\rm F}(x,y)=0.6\sin x\,y\,
 \operatorname{sinc}(\ve y/2),
 \label{eq:num-profile-differences}
\end{equation}
where \(\operatorname{sinc}(z)=\sin z/z\), with value one at zero.
Let \(G_\sigma(z)=(\sqrt{2\pi}\sigma)^{-1}
e^{-z^2/(2\sigma^2)}\), and define
\begin{align}
 n_0(x)&=\frac{1}{Z}\int_{\R}G_{\sigma_q}(q)
 [1+c\cos(\kappa_0q)]G_s(x-q)\,\dd q,
 \qquad Z=1+c e^{-\kappa_0^2\sigma_q^2/2},\nonumber\\
 R_0^\ve(x,y)&=n_0(x)
 \exp\!\left[-\frac12\left(\sigma_p^2+
 \frac{\ve^2}{4s^2}\right)y^2+\ii(2+0.3x)y\right],
 \label{eq:num-profile-initial}
\end{align}
with \(\sigma_q=1.15\), \(s=0.008\), \(\sigma_p=0.10\),
\(c=0.98\), and \(\kappa_0=40\). This is a positive Gaussian-packet
mixture with a spatial modulation whose frequency is independent of \(\ve\).

Figure~\ref{fig:num-weyl} shows the reference cross sections at \(T=0.15\)
for \(\ve=3\times10^{-3},10^{-3},10^{-5}\). All cases use the same
\(2048\times1024\) Fourier grid on
\([-8,8)\times[-80,80)\) and \(\Delta t=0.002\).
Both the \(y=0\) and \(x=0\) sections remain oscillatory, but their
visible oscillation scales do not shrink with decreasing \(\ve\).
Their amplitudes may change. This illustrates the relevant distinction:
uniform regularity permits oscillations whose scales remain controlled
as \(\ve\) decreases; it does not require a non-oscillatory profile.

\begin{figure}[!htbp]
 \centering
 \includegraphics[width=0.76\textwidth]{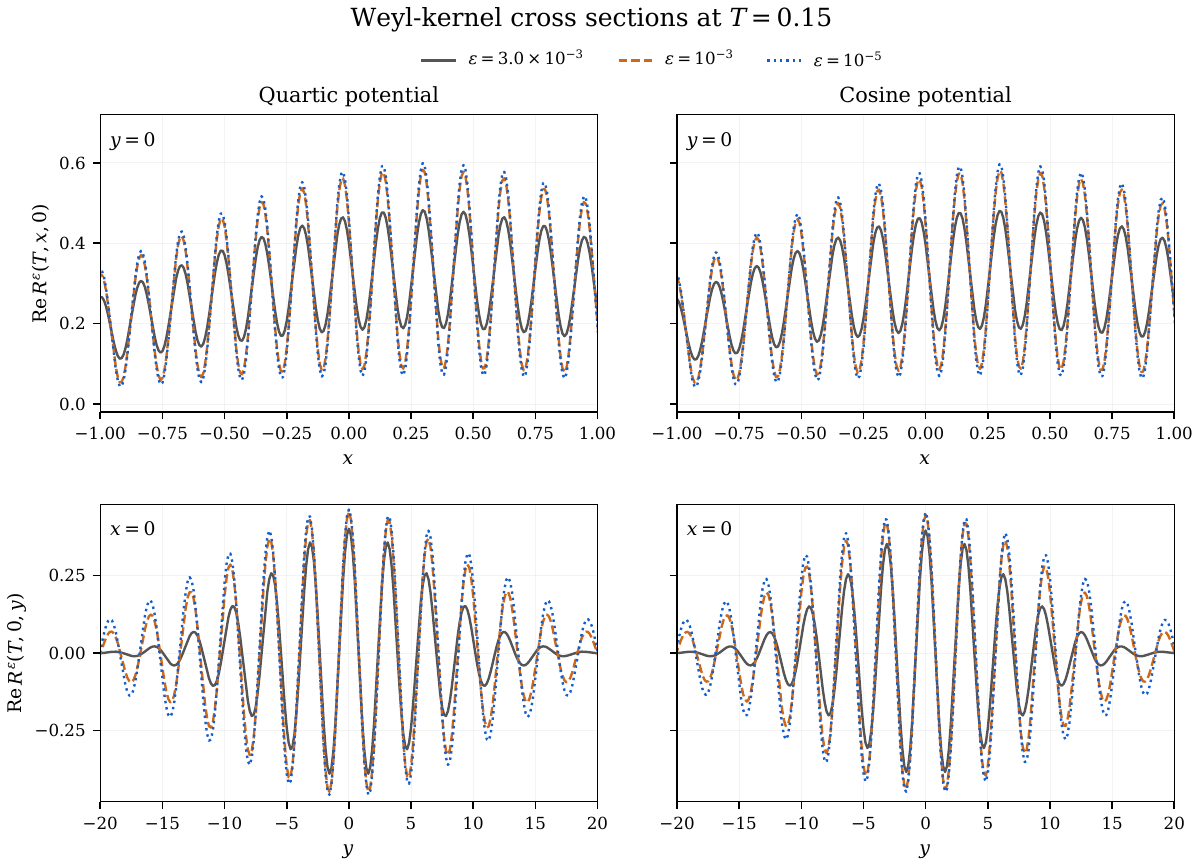}
 \caption{Weyl-kernel cross sections at \(T=0.15\) for the quartic potential
 (left) and cosine potential (right) in Eq.~\eqref{eq:num-profile-potentials}.
 The upper row shows \(\operatorname{Re}R^\ve(T,x,0)\); the lower row shows
 \(\operatorname{Re}R^\ve(T,0,y)\).
 Gray solid, orange dashed, and blue dotted curves correspond to
 \(\ve=3\times10^{-3},10^{-3},10^{-5}\), respectively.
 All curves are obtained on the same Fourier grid. The visible
 oscillation scales remain comparable as \(\ve\) decreases.}
 \label{fig:num-weyl}
\end{figure}
\FloatBarrier

\subsection{Morse potential: distinct regimes and computational resources}
\label{sec:num-morse}

We consider the Morse potential and the trace-one Gaussian initial state
\begin{equation}
 V(x)=20(1-e^{-0.16x})^2,\qquad
 R_0(x,y)=\frac{1}{\sqrt{2\pi}\,0.6}
 \exp\!\left[-\frac{(x-4)^2}{2(0.6)^2}-\frac{y^2}{2}\right].
 \label{eq:num-morse}
\end{equation}
The same initial Weyl profile is evolved to \(T=19\) for
\(\ve=10^{-4},10^{-3},10^{-2},0.1,0.2,0.5,1\).
We first compare the physical densities on a fixed grid and then select
parameters separately for each \(\ve\) to meet a common accuracy target.

\paragraph{Discretization and potential construction.}
The periodic spatial domain is \([-8,16)\), with nodes
\(x_i=-8+ih\), \(h=24/M\), and \(i=0,\ldots,M-1\).
We use \(M\) for the spatial grid size, as in the Letter.
We expand the Weyl profile in the orthonormal scaled Hermite functions
\begin{equation}
 \Phi_k^{(\ell)}(y)=\ell^{-1/2}\Phi_k(y/\ell)
 =\frac{\sqrt{2/3}}{\pi^{1/4}\sqrt{2^kk!}}
 H_k\!\left(\frac{2y}{3}\right)e^{-2y^2/9},
 \qquad \ell=\frac32,\quad k=0,\ldots,K-1.
 \label{eq:num-scaled-basis}
\end{equation}
Here \(H_k\) is the physicists' Hermite polynomial.
The initial coefficients are computed by 768-point Gauss--Legendre
quadrature on \([-40,40]\). All runs use the centered spatial derivative
of order 28 and the fourth-order time splitting in Eq.~\eqref{eq:num-time},
with \(\Delta t=0.01\).

To implement the smooth non-polynomial potential construction of the
Letter, we extend the Morse potential to a periodic function. Define
\begin{align}
 S(s)&=\begin{cases}
 0,&s\le0,\\
 \displaystyle\frac{e^{-1/s}}{e^{-1/s}+e^{-1/(1-s)}},&0<s<1,\\
 1,&s\ge1,
 \end{cases}
 \qquad
 \chi(z)=S\!\left(\frac{z+32}{8}\right)
         S\!\left(\frac{40-z}{8}\right),\nonumber\\
 V_{\rm ext}(z)&=20+W(z),\qquad
 W(z)=\chi(z)\left(20e^{-0.32z}-40e^{-0.16z}\right).
 \label{eq:num-morse-extension}
\end{align}
Thus \(V_{\rm ext}=V\) on \([-24,32]\), and \(W\) vanishes outside
\([-32,40]\). We periodize \(V_{\rm ext}\) on \([-36,44)\), of length
\(L=80\). The constant part, 20, cancels in the potential difference.
The remaining part is represented by the paired Fourier series
\begin{equation}
 W(z)\approx\sum_{q=-Q}^{Q}a_qe^{\ii\xi_qz},\qquad
 \xi_q=\frac{2\pi q}{L},\qquad
 a_q=\frac1L\int_{-36}^{44}W(z)e^{-\ii\xi_qz}\,\dd z.
 \label{eq:num-morse-fourier}
\end{equation}
The coefficients are computed by the periodic trapezoidal rule, using an
FFT with the spatial-origin phase included. We impose
\(a_{-q}=\overline{a_q}\) and omit the zero mode from the potential
difference. Here \(Q\) counts positive Fourier modes, so the paired
representation contains \(2Q\) nonzero modes. The extension and
coefficient quadrature are the same in all runs; the retained Fourier
cutoff is specified below.

Evaluating a matrix function after truncating the coordinate matrix
generally differs from projecting the full multiplication operator.
We therefore evaluate it on \(\widetilde K=K+32\) modes and compress
the result to the retained \(K\) modes. For the basis in
Eq.~\eqref{eq:num-scaled-basis}, the coordinate matrix is
\begin{equation}
 \Ymat_N^{(\ell)}=\ell\Ymat_N
 =\frac32\sum_{k=0}^{N-2}\sqrt{\frac{k+1}{2}}
 \bigl(\ket{k+1}\bra{k}+\ket{k}\bra{k+1}\bigr).
 \label{eq:num-scaled-coordinate}
\end{equation}
Thus \(\Ymat_N\) retains its unscaled definition from the Letter.
Let \(J:\mathbb C^K\to\mathbb C^{\widetilde K}\) be the zero-padding
embedding, and set
\begin{equation}
 B=\ell\sqrt{2\widetilde K},\qquad
 Z=\Ymat_{\widetilde K}^{(\ell)}/B,\qquad
 \beta_q=B\xi_q,\qquad \theta_q=\ve\beta_q/2\quad(q>0).
 \label{eq:num-morse-scaled-matrix}
\end{equation}
The factor \(\ell\) in \(\beta_q\) accounts for the scaled numerical
basis. Since \(\norm{\Ymat_{\widetilde K}^{(\ell)}}\le B\) and
\(\widetilde K\le224\) in every run, all shifted spectral points
\(x_i\pm\ve\lambda/2\), with
\(\lambda\in\sigma(\Ymat_{\widetilde K}^{(\ell)})\), lie in
\([-23.875,31.875]\subset[-24,32]\).
Before Fourier and polynomial approximation, the extension therefore
gives the same auxiliary matrix function as the original Morse
potential. The solution-mode truncation and the replacement of the
infinite coordinate operator by a finite matrix remain separate
approximations.

For each positive Fourier mode, we approximate
\(g_{q,\ve}(z)=z\operatorname{sinc}(\theta_qz)
=\sin(\theta_qz)/\theta_q\) on \([-1,1]\) by a real odd Chebyshev
polynomial \(p_{q,\ve}\). The truncated expansions are rescaled to
remain bounded by one, and their degrees satisfy
\begin{equation}
 2\sum_{q=1}^{Q}|a_q|\beta_q
 \norm{p_{q,\ve}-g_{q,\ve}}_{L^\infty([-1,1])}
 \le\delta_{\rm pot}.
 \label{eq:num-morse-poly-tol}
\end{equation}
Pairing the Fourier modes gives the real symmetric potential block
\begin{equation}
 U_{\ve,K}^{{\rm F},{\rm QSVT}}(x_i)
 =-2\sum_{q=1}^{Q}\operatorname{Im}(a_qe^{\ii\xi_qx_i})\,
 \beta_qJ^\dagger p_{q,\ve}(Z)J.
 \label{eq:num-morse-poly-block}
\end{equation}
These spatial blocks form
\(U_{\ve,h,K}^{{\rm F},{\rm QSVT}}
=\sum_i\ket i\bra i\otimes U_{\ve,K}^{{\rm F},{\rm QSVT}}(x_i)\),
using the notation of the Letter. The polynomials are evaluated
classically in the eigenbasis of the auxiliary coordinate matrix.
The compressed blocks are then diagonalized to apply the potential
subflows. This implements the polynomial matrix functions in the QSVT
construction of the Letter while retaining their full finite-\(\ve\)
dependence; quantum circuits are not simulated. The coarse evolutions
use Eq.~\eqref{eq:num-morse-poly-block} throughout.
In these classical tests, \(\delta_{\rm pot}\) controls polynomial
truncation and normalization in exact arithmetic, namely the QSVT
polynomial contribution to the finite-section implementation error.
Fourier quadrature, Fourier truncation, and floating-point errors are
separate. The corresponding block-encoding normalization is
\(\alpha_{\rm pot}^{\rm F}=2\sum_{q=1}^{Q}|a_q|\beta_q\), and the
maximum polynomial degree is
\(n_{\rm F}=\max_{1\le q\le Q}\deg p_{q,\ve}\).

\paragraph{Physical densities on a fixed grid.}
The independent reference uses the original Morse potential on a
\(1024\times2048\) Fourier grid over
\([-8,16)\times[-96,96)\), with \(\Delta t=0.01\).
It evaluates the finite-\(\ve\) potential difference pointwise and uses
the time splitting in Eq.~\eqref{eq:num-time}.
For each local density \(q\in\{n,j,E\}\), we report
\begin{equation}
 \mathcal E_\infty(q)=
 \frac{\max_i|q_{h,K}^\ve(T,x_i)-q_{\rm ref}^\ve(T,x_i)|}
 {\max_{x\in\mathcal G_{\rm ref}}|q_{\rm ref}^\ve(T,x)|},
 \label{eq:num-error}
\end{equation}
where the numerator uses all spatial nodes of the corresponding run,
and \(\mathcal G_{\rm ref}\) is the complete reference spatial grid.
We write \(e_q=\mathcal E_\infty(q)\); in particular, \(e_n\) is the
density-error notation used in Fig.~1 of the Letter.

Figure~\ref{fig:num-morse} compares \(\ve=10^{-4},0.2,1\) using the
fixed parameters
\begin{equation}
 M=256,\qquad h=0.09375,\qquad K=192,\qquad
 \widetilde K=224,\qquad Q=1536,\qquad\delta_{\rm pot}=10^{-8}.
 \label{eq:num-morse-grid}
\end{equation}
The same spatial grid and Hermite cutoff capture the three regimes.
Mass, momentum, and kinetic-energy densities have distinct profiles,
and the mass density changes substantially between small and order-one
\(\ve\). The small-\(\ve\) result therefore cannot replace the
\(\ve=1\) result for this initial Weyl profile. The displayed densities
are resolved by this grid; the large ratio \(h/\ve\) alone does not
demonstrate unresolved oscillations.

\paragraph{Relation to the benchmark in the Letter.}
Figure~1 of the Letter uses the same spatial mesh, retained and
auxiliary Hermite sizes, final time \(T=19\), and time step
\(\Delta t=0.01\). Its kernel panels show
\(g^\ve(X)=\operatorname{Re}\rho^\ve(T;X,0.98X-0.01)\) for
\(\ve=10^{-2}\) and \(5\times10^{-4}\), respectively.
The latter is an additional kernel-cut case beyond the seven values
used for the density comparison. The Weyl coordinate map gives
\[
 X_i=\frac{2x_i+0.01}{1.98},\qquad
 y_i=\frac{0.02X_i+0.01}{\ve},\qquad
 g_{h,K}^\ve(X_i)=\operatorname{Re}
 \sum_{k=0}^{K-1}c_{i,k}(T)\Phi_k^{(\ell)}(y_i).
\]
Here \(c_{i,k}(T)\) are the coefficients in the scaled numerical basis.
The displayed \(X\)-spacing is \(2h/1.98\), and all eleven
nodes in \([-1,0]\) are shown. The density panels use
\(n^\ve(T,x)=R^\ve(T,x,0)\); the interval \([-4,5]\) contains
96 nodes of this spatial mesh. The fixed-grid small-\(\ve\) density
error approaches \(1.225\times10^{-2}\), consistent with the rounded
annotation \(1.23\times10^{-2}\) in Fig.~\ref{fig:num-morse}.

\begin{figure}[!htbp]
 \centering
 \includegraphics[width=0.80\textwidth]{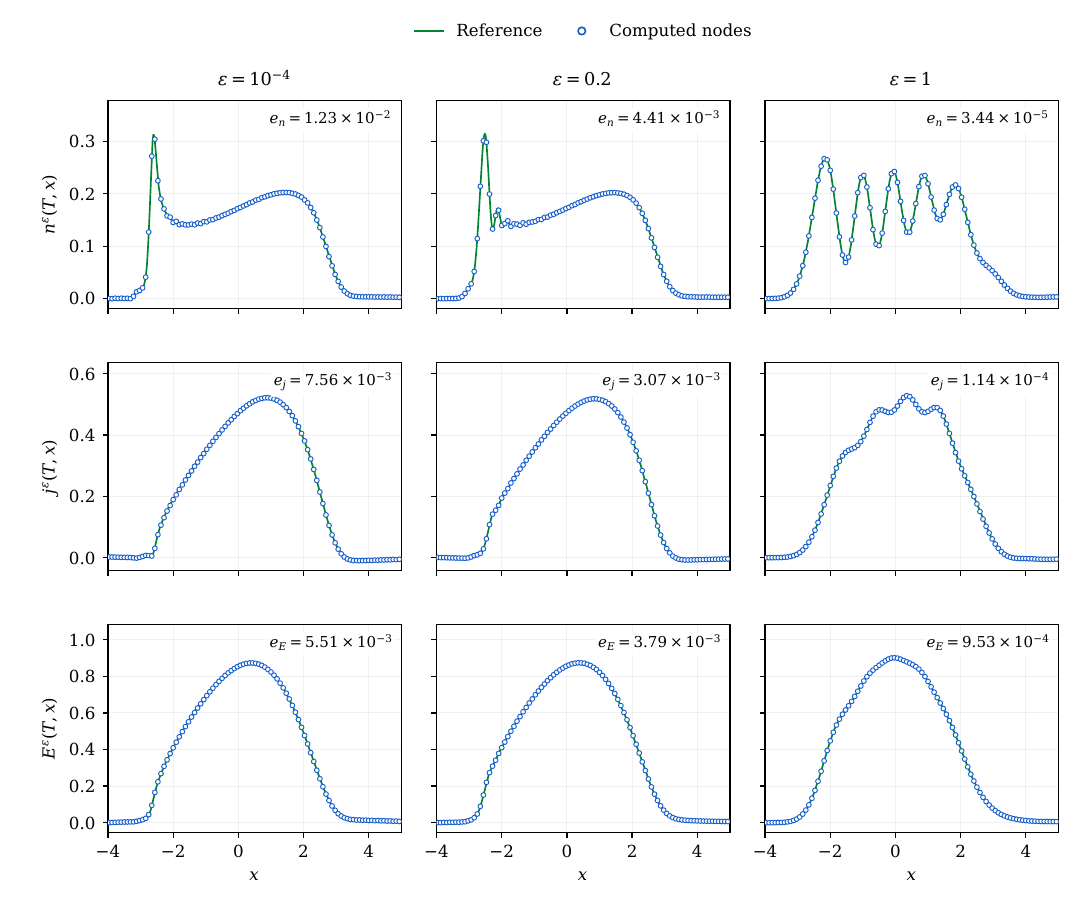}
 \caption{Morse-potential evolution at \(T=19\) with the fixed
 parameters in Eq.~\eqref{eq:num-morse-grid} and the potential construction
 in Eqs.~\eqref{eq:num-morse-extension}--\eqref{eq:num-morse-poly-block}.
 Columns correspond to \(\ve=10^{-4},0.2,1\); rows show mass density,
 momentum density, and Weyl kinetic-energy density.
 Green curves are the independent Fourier reference for the original
 Morse potential, and blue open circles show every computed spatial
 node in \([-4,5]\). The annotations give
 \(e_q=\mathcal E_\infty(q)\) over the full computational domain.
 All runs start from the unit-trace state in Eq.~\eqref{eq:num-morse}.}
 \label{fig:num-morse}
\end{figure}

\paragraph{Parameters at a common accuracy.}
To compare resources at a common accuracy, we choose \(M\), \(K\),
and \(Q\) separately for each \(\ve\), requiring
\begin{equation}
 \mathcal E_{\max}:=
 \max\{\mathcal E_\infty(n),\mathcal E_\infty(j),\mathcal E_\infty(E)\}
 \le5\times10^{-3}.
 \label{eq:num-common-accuracy}
\end{equation}
These runs use the same basis, extension, coefficient quadrature,
spatial difference formula, and time integrator as above, with
\(\widetilde K=K+32\) and \(\delta_{\rm pot}=10^{-6}\).
Each row of Table~\ref{tab:num-morse} is obtained by evolving from the
initial state with the listed parameters and checking all three
densities against the reference. These accuracy-selected runs are
distinct from the fixed-grid runs in Fig.~1 of the Letter and
Fig.~\ref{fig:num-morse}.

\begin{table}[!htbp]
 \centering
 \begin{tabular}{cccccc}
 \hline\hline
 \(\ve\)&\(M\)&\(K\)&\(Q\)&\(n_{\rm F}\)&\(\mathcal E_{\max}\)\\
 \hline
 \(10^{-4}\)&512&192&512&7&\(3.686\times10^{-3}\)\\
 \(10^{-3}\)&512&192&512&11&\(3.684\times10^{-3}\)\\
 \(10^{-2}\)&512&192&512&25&\(3.548\times10^{-3}\)\\
 \(0.1\)&512&192&512&99&\(3.274\times10^{-3}\)\\
 \(0.2\)&256&192&384&135&\(4.411\times10^{-3}\)\\
 \(0.5\)&256&96&256&163&\(3.728\times10^{-3}\)\\
 \(1\)&128&64&256&259&\(9.856\times10^{-4}\)\\
 \hline\hline
 \end{tabular}
 \caption{Parameters achieving the common accuracy target in
 Eq.~\eqref{eq:num-common-accuracy} at \(T=19\).
 All runs use the basis in Eq.~\eqref{eq:num-scaled-basis},
 \(h=24/M\), \(\widetilde K=K+32\), \(\Delta t=0.01\), and
 \(\delta_{\rm pot}=10^{-6}\). The Fourier cutoff \(Q\) counts
 positive modes; \(n_{\rm F}\) is the maximum degree of the potential
 polynomials. Errors include all computed spatial nodes.}
 \label{tab:num-morse}
\end{table}

All seven cases satisfy the same accuracy requirement. For \(\ve=1\),
the choices \(M=128\), \(K=64\), and \(Q=256\) give
\(n_{\rm F}=259\) and \(\mathcal E_{\max}=9.856\times10^{-4}\).
For the three tested values \(\ve=10^{-4},10^{-3},10^{-2}\), the common
choices \(M=512\), \(K=192\), and \(Q=512\) suffice, while the
polynomial degree decreases with \(\ve\). The table reports sufficient
parameters found by a finite search, without claiming minimal
computational cost.
\FloatBarrier

\subsection{Unresolved oscillations in a two-stage evolution}

We consider the von Neumann equation
\(\ii\ve\partial_t\widehat\rho=[\widehat H(t),\widehat\rho]\)
with the piecewise time-independent Hamiltonian
\begin{equation}
 \widehat H(t)=\begin{cases}
 -\dfrac{\ve^2}{2}\partial_x^2+V_b(x),&0\le t<T,\\[1mm]
 \dfrac{\Phi(x)}{\tau},&T\le t\le T+\tau,
 \end{cases}
 \qquad T=2,\quad\tau>0,
 \label{eq:num-two-stage-H}
\end{equation}
where
\begin{equation}
 V_b(x)=\frac{(x^2-1)^2}{4}+0.1x,\qquad
 \Phi(x)=-\frac{A}{\kappa}\sin(\kappa x),\qquad A=2,\quad\kappa=160.
 \label{eq:num-pulse-potentials}
\end{equation}
The first stage is evolution under a quartic potential. The second
stage contains only the sinusoidal potential and has no kinetic term.
Its amplitude is scaled by \(1/\tau\), so its total action is independent
of the duration \(\tau\). We denote the states at the switching and
final times by \(\widehat\rho^-=\widehat\rho(T)\) and
\(\widehat\rho^+=\widehat\rho(T+\tau)\), respectively. They satisfy
\begin{equation}
 \widehat\rho^+=e^{-\ii\Phi(\hat x)/\ve}\widehat\rho^-
 e^{\ii\Phi(\hat x)/\ve}.
 \label{eq:num-pulse}
\end{equation}
Thus Eq.~\eqref{eq:num-two-stage-H} realizes a terminal phase pulse
through a separate potential-only evolution.

\paragraph{Initial state and first-stage evolution.}
We take
\begin{equation}
 R_0(x,y)=\frac{1}{\sqrt{2\pi}\,0.65}
 \exp\!\left[-\frac{(x+1.2)^2}{2(0.65)^2}
             -\frac{0.85^2y^2}{2}+\ii0.6y\right],
 \label{eq:num-pulse-initial}
\end{equation}
with \(\ve=10^{-3}\), \(x\in[-8,8)\), \(M=256\),
\(h=0.0625\), and \(K=160\). Both stages use the Hermite basis in
Eq.~\eqref{eq:num-scaled-basis} and the scaled coordinate matrix in
Eq.~\eqref{eq:num-scaled-coordinate}.
For the first stage, the exact potential difference is
\(U_\ve=(x^3-x+0.1)y+\ve^2xy^3/4\).
The \(y^3\) multiplication matrix is formed on \(K+3\) modes and then
compressed to \(K\) modes, giving the exact retained multiplication
block. The 28th-order centered spatial difference and the fourth-order
splitting in Eq.~\eqref{eq:num-time}, with \(\Delta t=0.005\), yield
\(c_i^-\in\mathbb C^{160}\) at \(t=T\).

\paragraph{Second-stage evolution.}
The Fourier representation of \(\Phi\) has exactly two modes,
\(\Phi(x)=a_+e^{\ii\kappa x}+a_-e^{-\ii\kappa x}\), where
\(a_+=\ii A/(2\kappa)\) and \(a_-=\overline{a_+}\).
No Fourier quadrature or Fourier-mode truncation is required.
We construct the time-integrated second-stage generator on
\(\widetilde K=192\) auxiliary modes and then compress it to the
retained \(K=160\) modes. Let
\(J:\mathbb C^{160}\to\mathbb C^{192}\) be the zero-padding embedding
and set
\begin{equation}
 B=\ell\sqrt{2\widetilde K}=29.3938769\ldots,\qquad
 Z=\Ymat_{\widetilde K}^{(\ell)}/B,\qquad
 \beta=\kappa B,\quad\theta=\ve\beta/2=2.35151015\ldots.
 \label{eq:num-pulse-scaled-matrix}
\end{equation}
A real odd Chebyshev polynomial \(p_\theta\) approximates
\(g_\theta(z)=z\operatorname{sinc}(\theta z)=\sin(\theta z)/\theta\)
on \([-1,1]\) and is rescaled to remain bounded by one on this
interval. For this time-integrated generator, the normalization is
\(\alpha_\Phi=AB=58.7877538\ldots\), and the chosen degree is
\(n_{\rm F}=15\). The analytic truncation-plus-normalization bound is
\(\alpha_\Phi\norm{p_\theta-g_\theta}_\infty
\le4.242\times10^{-11}\), below the prescribed
\(\delta_{\rm pot}=10^{-10}\) polynomial budget for this integrated
generator. This is an exact-arithmetic polynomial bound; finite Hermite
projection and floating-point errors are separate.

Pairing the two Fourier modes gives the time-integrated Hermitian generator
\begin{align}
 U_{\Phi,K}^{\rm QSVT}(x_i)
 &=-AB\cos(\kappa x_i)J^\dagger p_\theta(Z)J,\nonumber\\
 c_i^+&=\exp\!\left[-\ii\tau\frac{U_{\Phi,K}^{\rm QSVT}(x_i)}{\tau}\right]c_i^-
 =\exp[-\ii U_{\Phi,K}^{\rm QSVT}(x_i)]c_i^-.
 \label{eq:num-pulse-coefficients}
\end{align}
The generator is compressed \emph{before} exponentiation.
The polynomial matrix and the \(K\)-dimensional exponential are
evaluated classically, as a realization of the polynomial target in
the Fourier--QSVT construction; no quantum circuit is simulated.
The computed post-pulse densities are extracted directly from the
propagated coefficients:
\begin{align}
 n_i^+&=\operatorname{Re}\sum_{k=0}^{K-1}c_{i,k}^+\Phi_k^{(\ell)}(0),
 \qquad
 j_i^+=\operatorname{Re}\sum_{k=0}^{K-1}
 (-\ii)c_{i,k}^+(\Phi_k^{(\ell)})'(0),\nonumber\\
 E_i^+&=\operatorname{Re}\sum_{k=0}^{K-1}
 \left(-\frac12\right)c_{i,k}^+(\Phi_k^{(\ell)})''(0).
 \label{eq:num-pulse-coefficient-readout}
\end{align}
The blue nodes in Fig.~\ref{fig:num-pulse} all come from
Eqs.~\eqref{eq:num-pulse-coefficients}--\eqref{eq:num-pulse-coefficient-readout}.
No analytic moment update is applied to those computed nodes.

\paragraph{Reference solution and nodal accuracy.}
The pre-pulse reference uses \(1024\times1536\) Fourier points on
\([-8,8)\times[-48,48)\), with \(\Delta t=0.005\).
For the second stage of this independent reference, we use the exact identities
\begin{equation}
 n^+=n^-,\qquad j^+=j^-+gn^-,\qquad
 E^+=E^-+gj^-+\tfrac12g^2n^-;
 \qquad g(x)=-\Phi'(x)=A\cos(\kappa x).
 \label{eq:num-pulse-moments}
\end{equation}
The reference pre-pulse moments are Fourier-interpolated to a 32768-node
grid on the full spatial domain before applying these identities.
The maximum absolute errors over all 256 computed nodes, divided by
the maximum absolute value of the respective full dense reference curve, are
\begin{equation}
 \mathcal E_\infty(n^+)=4.908\times10^{-4},\qquad
 \mathcal E_\infty(j^+)=6.874\times10^{-4},\qquad
 \mathcal E_\infty(E^+)=1.369\times10^{-3}.
 \label{eq:num-pulse-errors}
\end{equation}
Momentum contains the pulse wavelength \(2\pi/\kappa\simeq0.03927\),
and energy also contains \(\pi/\kappa\simeq0.01963\); both are
smaller than \(h=0.0625\), while density varies on a longer scale.
The finite Hermite propagation approximates the exact pulse, so the
reference identities need not hold exactly for the computed moments.

The two stages in Eq.~\eqref{eq:num-two-stage-H} use the polynomial and
smooth non-polynomial potential constructions, respectively.
The rapid oscillations are generated during the second stage, when
there is no spatial transport. We read out the densities at \(T+\tau\)
and perform no further evolution. The known sinusoidal potential is
evaluated at every coarse node. Accurate nodal values do not imply
that the complete oscillatory curves can be reconstructed from these
samples alone.

\begin{figure}[!htbp]
 \centering
 \includegraphics[width=0.54\textwidth]{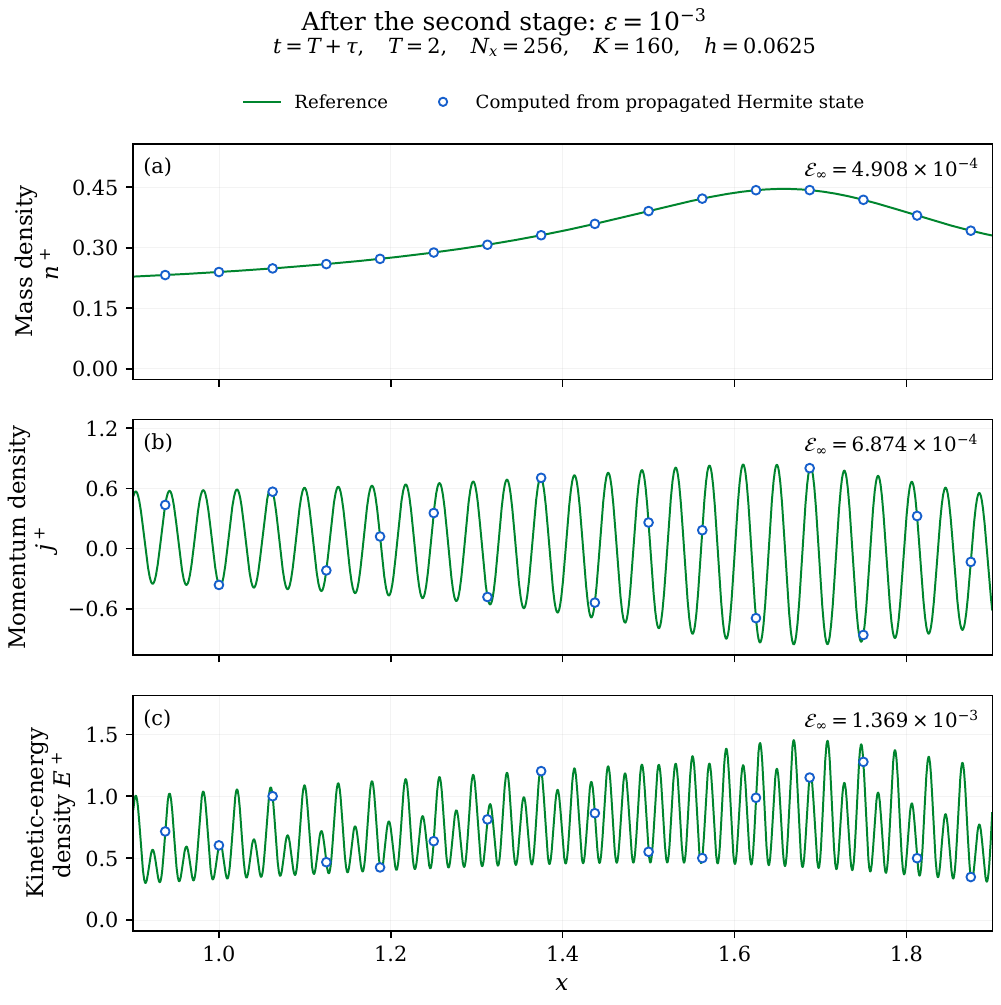}
 \caption{Mass, momentum, and Weyl kinetic-energy densities at the end
 of the two-stage evolution in Eq.~\eqref{eq:num-two-stage-H}, for
 \(\ve=10^{-3}\). Here \(T=2\) denotes the end of the first stage;
 the final readout is at \(T+\tau\).
 Green curves are the independent reference with the exact second-stage
 evolution applied. Blue open circles are read directly from the
 propagated \(K=160\) Hermite coefficients after exponentiating the
 compressed polynomial generator in Eq.~\eqref{eq:num-pulse-coefficients}.
 All 16 actual nodes in \(0.9\le x\le1.9\) are displayed.
 The spatial step \(h=0.0625\) exceeds the pulse wavelengths
 \(2\pi/\kappa\) and \(\pi/\kappa\). The errors in
 Eq.~\eqref{eq:num-pulse-errors} use all 256 nodes on the full domain.}
 \label{fig:num-pulse}
\end{figure}
\FloatBarrier

\end{document}